\long\def\emph#1{\ifmmode\nfss@text{\em #1}\else\hmode@bgroup\text@command{#1}\em\check@icl #1\check@icr\expandafter\egroup\fi}
\newtheorem{theorem}{Theorem}[section]
\newtheorem{example}[theorem]{Example}
\newtheorem{lemma}[theorem]{Lemma}
\newtheorem{remark}{Remark}
\newtheorem{proposition}[theorem]{Proposition}
\newtheorem{definition}{Definition}
\newcommand{\tr}{\operatorname{tr}}
\newcommand\Vtextvisiblespace[1][.5em]{%
  \mbox{\kern.06em\vrule height.3ex}%
  \vbox{\hrule width#1}%
  \hbox{\vrule height.3ex}}
\tikzset{wave/.style={decorate, decoration=snake}}
\definecolor{MyBlue}{rgb}{0.25,0.5,0.75}
\colorlet{NextBlue}{MyBlue!20}
\colorlet{SecondBlue}{MyBlue!40}
\NewDocumentCommand{\tens}{t_}
 {%
  \IfBooleanTF{#1}
   {\tensop}
   {\otimes}%
 }
\NewDocumentCommand{\tensop}{m}
 {%
  \mathbin{\mathop{\otimes}\displaylimits_{#1}}%
 }
 \definecolor{shadecolor}{rgb}{0.9, 0.9, 0.86}
\definecolor{darkgreen}{rgb}{0.2, 0.5,  0}
\definecolor{darkblue}{rgb}{0.1,0.1,0.45}
\definecolor{red}{rgb}{0.9,0,0}
\DeclareMathOperator{\res}{Res}
\newcommand{\sfa}{{\sf a}}
\newcommand{\sfd}{{\sf d}}
\newcommand{\sfY}{{\sf Y}}
\newcommand{\sfm}{{\sf m}}
\newcommand{\cC}{\mathcal{C}}
\newcommand{\HHh}{\hat{\mathcal{H}}}
\newcommand{\T}{\mathcal{T}}
\newcommand{\ad}{\mathrm{ad}}
\newcommand{\ph}{\hat{\psi}}
\newcommand{\phb}{\bar{\hat{\psi}}}
\definecolor{darkspringgreen}{rgb}{0.05, 0.5, 0.06}
\definecolor{MyBlue}{rgb}{0.25,0.25,0.75}
\definecolor{MyRed}{rgb}{0.75,0.25,0.25}
\colorlet{NextBlue}{MyBlue!20}
\colorlet{SecondBlue}{MyBlue!40}
\def\XXint#1#2#3{{\setbox0=\hbox{$#1{#2#3}{\int}$ }
\vcenter{\hbox{$#2#3$ }}\kern-.6\wd0}}
\def \bea#1\eea {\begin{align} #1 \end{align}}
\def\red#1{\textcolor[rgb]{0.9, 0, 0}{#1} }
\def \bl{\begin{lemma}}
\def \el{\end{lemma}}
\def \br{\begin{remark}\rm }
\def \er{\hfill $\triangle$\end{remark}}
\def \bp{\begin{proposition}}
\def \ep{\end{proposition}}
\def \eqref#1{(\ref{#1})}
\def \remove #1 { \sout{ {\color{red} #1}}}
\newcommand{\C}{\mathbb{C}}
\def\res{\mathop{\mathrm {res}}\limits_}
\def \tr {\mathrm{tr}\,}
\def\be{\begin{equation}}
\def\ee{\end{equation}}
\def \bd{\begin{definition}}
\def \ed{\end{definition}}
\def\SL{\mathrm{SL}}
\def\GL{\mathrm{GL}}
\DeclareMathOperator{\Span}{span}
\DeclareMathOperator{\SO}{SO}
\DeclareMathOperator{\Sp}{Sp}
\def\ra{{\rightarrow}}
\def\tr{\mathrm {tr}}
\def\det{\mathrm {det}}
\def\Pf{\mathrm {Pf}}
\def\res{\mathop{\mathrm {res}}\limits}
\def\JMU{\mathrm {JMU}}
\def\res{\mathop{\mathrm{res}}\limits}
\def\&{&{\hskip -20 pt}}
\def\be{\begin{equation}}
\def\ee{\end{equation}}
\def\bea{\begin{eqnarray}}
\def\eea{\end{eqnarray}}
\def\bt{\begin{theorem}}
\def\et{\end{theorem}}
\def\bex{\begin{example}\small \rm}
\def\eex{\end{example}}
\def\bexs{\begin{examples}\small \rm}
\def\eexs{\end{examples}}
\def\ra{\rightarrow}
\def\ss{\subset}
\def\br{\begin{remark}\small \rm}
\def\er{\end{remark}}
\def\FF {{\mathcal F}}
\def\HH{{\mathcal H}}
\def\SS{{\mathcal S}}
\def\Ib{\mathbf{I}}
\def\nb{\mathbf{n}}
\def\pb{\mathbf{p}}
\def\qb{\mathbf{q}}
\def\tb{\mathbf{t}}
\def\Cbb{\mathbb{C}}
\def\Zbb{\mathbb{Z}}
 \def\grg{\mathfrak{g}}
 \def\grh{\mathfrak{h}}
 \def\grso{\mathfrak{so}}
\newcommand*\pFq[6][8]{%
  \begingroup 
  \pFqmuskip=#1mu\relax
  \mathchardef\normalcomma=\mathcode`,
  \mathcode`\,=\string"8000
  \begingroup\lccode`\~=`\,
  \lowercase{\endgroup\let~}\pFqcomma
  {}_{#2}F_{#3}{\left[\genfrac..{0pt}{}{#4}{#5};#6\right]}%
  \endgroup
}
\newcommand{\pFqcomma}{{\normalcomma}\mskip\pFqmuskip}
\numberwithin{equation}{section}
\numberwithin{remark}{section}
\begin{document}
\baselineskip 16pt
 
 \medskip
\begin{center}
\begin{Large}\fontfamily{cmss}
\fontsize{17pt}{27pt}
\selectfont
	\textbf{Fredholm Pfaffian $\tau$-functions for orthogonal isospectral and isomonodromic systems}
	\end{Large}

\renewcommand{\thefootnote}{\fnsymbol{footnote}}

\bigskip \bigskip
\begin{large}
M. Bertola$^{1,2,3}$\footnote[2]{e-mail:marco.bertola@concordia.ca},
Fabrizio  Del Monte$^{1,2}$\footnote[3]{e-mail:delmonte@crm.umontreal.ca,}
and J. Harnad$^{1,2}$\footnote[1]{e-mail:harnad@crm.umontreal.ca}
\end{large}
\\
\bigskip
\begin{small}
$^{1}${\em Centre de recherches math\'ematiques, Universit\'e de Montr\'eal, \\C.~P.~6128, succ. centre ville, Montr\'eal, QC H3C 3J7  Canada}\\
$^{2}${\em Department of Mathematics and Statistics, Concordia University\\ 1455 de Maisonneuve Blvd.~W.~Montreal, QC H3G 1M8  Canada}\\
$^{3}${\em SISSA, International School for Advanced Studies, via Bonomea 265, Trieste, Italy} 
\end{small}
\end{center}

\begin{abstract}
We extend the approach to  $\tau$-functions as Widom constants developed by Cafasso, Gavryleko and Lisovyy
to orthogonal loop group Drinfeld-Sokolov reductions and isomonodromic deformations systems. The
combinatorial expansion of the $\tau$-function as a sum of correlators, each expressed as
products of finite determinants, follows from using multicomponent fermionic vacuum expectation values of certain 
dressing operators  encoding the initial conditions and  dependence on the time parameters.
When reduced to the orthogonal case, these correlators become finite Pfaffians and the 
determinantal $\tau$-functions, both in the Drinfeld-Sokolov and isomonodromic case,  become squares of $\tau$-functions of
Pfaffian type. The results are illustrated by several examples, consisting of polynomial $\tau$-functions of
orthogonal Drinfeld-Sokolov type and isomonodromic ones with four regular singular points.\end{abstract}
\bigskip
\renewcommand{\thefootnote}{\arabic{footnote}}
\tableofcontents
\section{Introduction}
\label{intro}

 In \cite{Cafasso2017,  Cafasso2018} a new approach to  $\tau$-functions for integrable systems,  viewed as  {\em Widom constants}
associated to various classes of Riemann-Hilbert problems, was applied both to Drinfeld-Sokolov hierarchies 
and isomonodromic deformation systems. This led to an explicit combinatorial expansion of the $\tau$-functions
as sums over products of pairs of determinants characterizing the initial condition data and the deformation parameter dependence,
respectively. 

In this work, we develop this approach further, interpreting the $\tau$-function, as in the Sato-Jimbo-Miwa  approach \cite{SJM77, SJM78, Sato81, MJD},
 in terms of fermionic free field vacuum expectation values (VEV's), and then applying a suitable reduction procedure to obtain
integrable systems  in orthogonal loop groups and algebras. The determinant  expansions are replaced by infinite sums over products of pairs of finite Pfaffians, 
  with all the terms interpreted equivalently as fermionic VEV's, or scalar products in the fermionic Fock space of suitably ``dressed'' elements.
  
  Section \ref{ortho_reductions} defines a family of (complex) scalar products $Q_S$ on our Hilbert space $\HH^N$, 
    parametrized by symmetric $N\times N$ matrices $S$ whose square is the unit matrix.
  These are used to  define orthogonal reduction from the general loop group setting $L\GL(N)$ to the orthogonal loop groups $L_SSO(N)$,

In Section \ref{widom_constant_tau}, the Riemann-Hilbert problem definition of the {\em Widom constant} $\tau$-function $\tau_W[J]$
associated to a loop group element $J \in L\GL(N)$ is recalled \cite{Cafasso2018}, and expressed as the Fredholm determinant 
of a trace zero perturbation of the identity operator. The upper and lower block triangular parts, when expressed in a Fourier basis,
are labelled by $N$-tuples of  {\em Maya diagrams}  (see \cite{MJD} or \cite{HB},  App. A)  or, equivalently, {\em charged Young diagrams}, 
representing pairs $(\lambda, n)$ of a partition $\lambda$ and an integer $n$, denoting the  fermionic charge , 
which is the ``center of gravity'' of the distribution of occupied and unoccupied sites (``particles'' and ``holes'') along
 the $1D $ half-integer lattice.   
The restriction (\ref{eq:ORHP}) to  the orthogonal loop subgroup $L_SS O(N)$ is introduced, and the Fourier components of the 
upper and lower triangular blocks are similarly labelled by $N$-tuples of pairs, but consisting now
of {\em strict} partitions.  The resulting skew symmetric conditions implied on 
the upper and lower triangular parts of the integral operators  defining the trace zero perturbations are derived (Lemma \ref{eq:antisymB}).

The associated Clifford algebra of multicomponent fermions acting on the  fermionic Fock space is introduced in 
Section \ref{sec:N_comp_fermion_LGLN_tau}. The upper and lower triangular blocks of the integral operators  
are used to define ``dressed states'' (\ref{d_state}), (\ref{a_state}),  whose scalar product reproduces the $\tau$-function 
(Proposition \ref{ad_scalar_prod}). The combinatorial expansion (\ref{eq:WidomMinor}) of the Widom $\tau$-function  $\tau_W[J]$ as a sum over products of 
finite dimensional determinants derived in \cite{Cafasso2018} is rederived using this fermionic representation.

The two main new results of the paper are contained in Theorems \ref{thm:RealFTau} and \ref{thm:PfMinor}.
Theorem \ref{thm:RealFTau} expresses the  Pfaffian-type $\tau$-function  $\tau_O[J]$, 
as the scalar product (\ref{eq:TauD}) of two  ``dressed''  fermionic states, analogously to (\ref{d_state}), (\ref{a_state}),
and shows that the corresponding determinantal Widom  $\tau$-function $\tau_W[J]$ is its square (\ref{tau_widom_tau_pfaff}). 
Theorem \ref{thm:PfMinor}  is the Pfaffian analogue of  the combinatorial expansion (\ref{eq:WidomMinor}), replacing it with the 
sum (\ref{eq:TauFredPf})  over products of finite Pfaffians.  The sense in which $\tau_O[J]$ is an infinite dimensional Pfaffian is 
explained in the subsequent discussion,  leading to formula (\ref{eq:TauFredPf}).
 
 In Section \ref{drinfeld_sokolov_pfaff}, we recall the Drinfeld-Sokolov hierarchiy for simple  loop Lie algebras, and in Section \ref{drinfeld_sokolov_tau}, 
 compute some explicit examples of polynomial $\tau$-functions for the orthogonal loop algebras  corresponding
 to the affine Kac-Moody algebras $B_1^{(1)}$, $B_2^{(1)}$ and $D^{(1)}_4$.

Section \ref{SO_N_isomonodromy} concerns the isomonodromic case with four simple poles. It is shown,
by a change of representation, how the results of \cite{Cafasso2017} for the case $\SL(2)$ can
 be adapted to deducing the corresponding $\SO(3)$ isomonodromic $\tau$-function
 as a Pfaffian which, in turn, is the square of the $\SL(2)$ $\tau$-function computed in \cite{Cafasso2017} in terms
 of sums over hypergeometric functions.

\section{Definition of the spaces and orthogonal reductions}
\label{ortho_reductions}

Consider the Hilbert space 
\be
\HH^N:= L^2(S^1,\Cbb^N),
\ee
 with basis elements 
\be
  e_r^\alpha:= z^{r-1/2}\hat{e}_\alpha,   \quad \alpha \in \{1, \dots, N\}, \quad r \in \Zbb' := \Zbb+\frac{1}{2},
\ee
 where
\be
  z:=e^{i\theta} \in S^1,
\ee
and $\hat{e}_\alpha$ is the unit vector in $\mathbb{C}^N$ with $\alpha$-th entry $1$ and the others $0$. The dual basis elements of $\HH^{N*}$ 
are denoted $\{e^r_\alpha\}$, defined such that
\be
e^r_\alpha(e_s^\beta )=\delta_\alpha^\beta\delta_s^r.
\ee
This pairing may be viewed equivalently as defining a scalar product $Q$ on $\HH^N\oplus\HH^{N*}$  by:
\bea
Q&\&:(\HH^N\oplus\HH^{N*}) \times (\HH^N\oplus\HH^{N*})  \rightarrow  \Cbb  \cr
 Q\left((F+\mu),(G+\nu) \right)&\&:=\mu(G)+\nu(F), \quad  F,G \in \HH^N, \ \mu, \nu \in \HH^{N*},
\eea
with respect to which both $\HH^N$ and $\HH^{N*}$ are totally isotropic.
It also establishes an isomorphism $\HH^N\simeq \HH^{N*}$ in which we identify the basis elements 
with their duals
\be\label{eq:DualPair}
e_r^\alpha \sim e^{-r}_\alpha,
\ee
so that
\be
e^r_\alpha \simeq z^{-r-\frac{1}{2}}\hat{e}_\alpha.
\ee

For any symmetric $N\times N$ matrix involution $S$,   define two mutually orthogonal complementary 
subspaces $\HH^N_{(S)},\,\HHh^N_{(S)}$ as
\be
\HH^N_{(S)}:= \Span\left\{f_{r,\alpha}^{(S)} \right\}, \quad
\HHh^N_{(S)}:= \Span\left\{\hat{f}_{r,\alpha}^{(S)} \right\},
\ee
where
\be
f_{r,\alpha}^{(S)} := \frac{1}{\sqrt{2}}\left(e^{-r}_\alpha+ \sum_{\beta=1}^N S_{\alpha \beta} e^\beta_{r} \right), \quad \hat{f}_{r,\alpha}^{(S)}
:= \frac{i}{\sqrt{2}}\left(e^{-r}_\alpha-\sum_{\beta=1}^N S_{\alpha \beta} e^\beta_{r} \right).
\ee
This gives an alternative  direct sum decomposition
\be
\HH^N\oplus\HH^{N*}=\HH^N_{(S)}\oplus\HHh^N_{(S)}.
\ee
Restricting  $Q$ to $\HH^N_{(S)}$ and $\HHh^N_{(S)}$ induces a scalar product on each
\bea
Q_S &\&:= Q|_{\mathcal{H}^N_{(S)}},  \quad \hat{Q}_S:= Q|_{\HHh^N_{(S)}},\\
&\& \cr
Q_S\left(f_{r,\alpha}^{(S)},f_{s,\beta}^{(S)} \right)&\&=S_{\alpha\beta}\delta_{r,-s}, 
\quad \hat{Q}_{S}\left(\hat{f}_{r,\alpha}^{(S)},\hat{f}_{s,\beta}^{(S)} \right) = S_{\alpha\beta}\delta_{r,-s}.
\eea
The spaces $\HH^N_{(S)}$ and $\HHh^N_{(S)}$ may both be viewed as isomorphic to $\mathcal{H}^N=L^2(S^1,\Cbb^N)$, through the correspondence of basis elements
\be
f^{(S)}_{r,\alpha}\leftrightarrow e_r^\alpha, \quad \hat{f}^{(S)}_{r,\alpha}\leftrightarrow e_r^\alpha.
 \label{eq:QO_isom}
\ee

In the following, the symbol $\oint$  denotes the positively oriented contour integral
$\oint_{S^1}$ around the unit circle $S^1 = \{e^{i\theta} \ss \Cbb\}$ centred at the origin. 
Under the isomorphism $\HH^N \sim \HH^N_S$,  the bilinear form  $Q_{S}:\HH^N\times\HH^N\ra \Cbb$ is given by
\be
 Q_{S}(F,G) = \frac{1}{2\pi i} \oint F(z)^t\,S G(z) dz=: \left(F,G \right)_{S} ,
 \label{eq:PairingO}
 \ee
where the superscript $^t$ denotes matrix transposition 
The loop group elements that preserve this scalar product consist of invertible matrix-valued functions on $S^1$ satisfying
\begin{equation}
\label{eq:OSJump}
J(z)^tSJ(z)=S.
\end{equation}
We refer to the loop group arising from this reduction as the ``$S$-orthogonal loop group'', and denote it by $L_SSO(N)$.

Let $\HH^N_+, \HH^N_- \ss \HH^N$ denote the subspaces consisting of elements admitting analytic continuation inside and outside the circle $S^1$, respectively, with the elements of $\HH^N_-$ vanishing at $z=\infty$. 
We then have the direct sum decomposition
\begin{equation}
\HH^N=\HH^N_+\oplus\HH^N_-,
\end{equation}
where each subspace is totally isotropic with respect to $Q_{S}$, and (\ref{eq:PairingO})  defines a dual pairing between the two. 
 Expressed as Fourier series, we have
\be
F(z)=\sum_{r\in\Zbb'_+} F_pz^{p-\frac{1}{2}}\in\HH^N_+, \quad G(z)=\sum_{p\in \Zbb'_+} G_pz^{-p-\frac{1}{2}}\in \HH^N_-,
\ee
where $\Zbb'_+$ is the set of positive half-integers.


\section{The Widom $\tau$-function and operator transposes}
\label{widom_constant_tau}
Following \cite{Cafasso2017}, we  consider matrix Riemann-Hilbert problems (RHPs) on the circle $S^1$: Let
\be
\Psi_+(z)\in L_+\GL(N,\mathbb{C}) \ss L\GL(N,\mathbb{C}) 
\ee
 denote the elements of the subgroup of the loop group $L\GL(N,\mathbb{C}) $ that are  analytic inside the circle, and 
 \be
\Psi_-(z)\in L_-\GL(N,\mathbb{C}) \ss L\GL(N,\mathbb{C}) 
\ee 
those that are analytic outside, with $\Psi_-(\infty) = \Ib$. The jump across the  circle  is denoted
\begin{equation}
\label{eq:JumpRHP}
J(z):= \Psi_-(z)^{-1}\Psi_+(z).
\end{equation}
A factorization of this type will be called a {\em direct} factorization, as opposed to a factorization of the type
\begin{equation}
\label{eq:JumpDual}
J(z)=\bar{\Psi}_+(z)\bar{\Psi}_-(z)^{-1},
\end{equation}
which will be called the {\em dual} factorization, and the corresponding RHP the dual RHP.

The action of an integral operator $A:\HH^N\ra \HH^N$ with matrix kernel $A(z,w)$ is expressed as
\be
A(F)(z)=\frac{1}{2\pi i}\oint A(z,w)F(w) dw,  \quad F\in \HH^N,
\ee
and  the matrix transpose of $A(z,w)$ is denoted $A(z,w)^t$. 
Recall the definition of the Widom $\tau$-function  \cite{Cafasso2017}:
\begin{definition}
Given a RHP on the  unit circle $S^1= \{z = e^{i\theta}\}$ with jump \eqref{eq:JumpRHP}, the Widom $\tau$-function is
\begin{equation}\label{eq:WidomTauDef}
\tau_W[J]:=\det_{\HH^N_+}\left(\Pi_+J^{-1}\Pi_+J \right)=\det_{\HH^N}\left[\mathbb{I}+\left( \begin{array}{cc}
0 & \sfa \\
\sfd & 0
\end{array} \right) \right]=\det_{\HH^N_+}\left(\mathbb{I}-\sfa\sfd \right) ,
\end{equation}
where $\mathbb{I}$ is the identity operator on $\HH^N$, $\Pi_+$ is the projection operator of $\HH^N$ onto $\HH^N_+$ along $\HH^N_-$ and
 \be
 \sfa:\HH^N_-\rightarrow \HH^N_+ , \quad \sfd:\HH^N_+\rightarrow \HH^N_-
 \ee
  have kernels
\be
\label{eq:adKers}
\sfa(z,w):=\frac{\mathbb{I}_N-\Psi_+(z)\Psi_+(w)^{-1}}{z-w}, \quad \sfd(z,w):=\frac{\Psi_-(z)\Psi_-(w)^{-1}-\mathbb{I}}{z-w}.
\ee
Their Fourier expansions are written as:
\be
\label{eq:adFourier}
\sfa(z,w)=\sum_{p,q\in\Zbb'_+}\sfa_{-q}^pz^{p-\frac{1}{2}}w^{q-\frac{1}{2}}, \quad \sfd(z,w)=\sum_{p,q\in\Zbb'_+}\sfd^{-q}_pz^{-q-\frac{1}{2}}w^{-p-\frac{1}{2}}.
\ee
\end{definition}

We use the same notation  
\be
\sfa: \HH^N  \rightarrow \HH^N, \quad \sfd: \HH^N  \rightarrow \HH^N
\ee
 to denote the extension of the operators $\sfa$ and $\sfd$  to $\HH^N$ defined by setting the restriction of $\sfa$ to $\HH^N_+$ and  $\sfd$ to $\HH^N_-$ to vanish. 
The condition \eqref{eq:OSJump} that $J(z)$  preserve the scalar product $Q_S$  carries over to $\Psi_\pm$:
\be
J(z)SJ(z)^t=\Psi_\pm(z)S\Psi_\pm(z)^t=S.
\label{eq:ORHP}
\ee
We then have the following:
\begin{lemma}
If the reduction condition \eqref{eq:OSJump} corresponding to the scalar product $Q_S$ is satisfied,
 the integral kernels of the operators $\sfa,\sfd$ in \eqref{eq:adKers} satisfy
\be
\label{eq:antisymB}
\sfa(w, z)^t=-S\sfa(z,w)S, \quad \sfd(w,z)^t=-S\sfd(z,w)S.
\ee
In terms of the coefficients appearing in the Fourier expansion, this is equivalent to 
\be
\label{eq:antisymB2}
(\sfa^{p}_{-q})_{\alpha\beta}=-(S\sfa^{q}_{-p}S)_{\beta\alpha},\quad (\sfd^{-q}_{p})_{\alpha\beta}=-(S\sfd^{-p}_{q}S)_{\beta\alpha}.
\ee
\end{lemma}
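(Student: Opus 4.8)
The starting point is the orthogonality condition \eqref{eq:ORHP}, namely $\Psi_\pm(z)S\Psi_\pm(z)^t = S$, equivalently $\Psi_\pm(w)^{-1} = S\,\Psi_\pm(w)^t\, S$ (using $S^2=\mathbb{I}_N$). First I would substitute this into the kernels \eqref{eq:adKers}. For $\sfa$, write
\be
\sfa(z,w) = \frac{\mathbb{I}_N - \Psi_+(z)\Psi_+(w)^{-1}}{z-w} = \frac{\mathbb{I}_N - \Psi_+(z)\,S\,\Psi_+(w)^t\,S}{z-w}.
\ee
Now transpose in the matrix sense and also swap the roles of $z$ and $w$. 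Taking $\sfa(w,z)^t$ and inserting $\Psi_+(w)^{-1} = S\Psi_+(w)^tS$ at the appropriate place, one gets
\be
\sfa(w,z)^t = \left(\frac{\mathbb{I}_N - \Psi_+(w)S\Psi_+(z)^tS}{w-z}\right)^t = \frac{\mathbb{I}_N - S\,\Psi_+(z)\,S\,\Psi_+(w)^t}{w-z},
\ee
and then factoring $S$ out on the left and right, using $S = S\,\mathbb{I}_N\,S$, gives $\sfa(w,z)^t = S\,\dfrac{\mathbb{I}_N - \Psi_+(z)S\Psi_+(w)^tS}{w-z}\,S = -S\,\sfa(z,w)\,S$, which is the first identity in \eqref{eq:antisymB}. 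The computation for $\sfd$ is entirely parallel, starting from $\sfd(z,w) = \dfrac{\Psi_-(z)S\Psi_-(w)^tS - \mathbb{I}_N}{z-w}$ and using $\Psi_-(z)^t S \Psi_-(w)\cdots$; the sign flip comes again from the antisymmetry of $z-w$ under the swap.

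The second obstacle, and the only one requiring a little care, is translating the kernel identity \eqref{eq:antisymB} into the Fourier-coefficient statement \eqref{eq:antisymB2}. Here I would plug the expansions \eqref{eq:adFourier} into both sides of $\sfa(w,z)^t = -S\sfa(z,w)S$. On the left, $\sfa(w,z)^t = \sum_{p,q} (\sfa^p_{-q})^t\, w^{p-\frac12} z^{q-\frac12}$, where the transpose is on the $N\times N$ matrix coefficient; relabelling the summation indices $p\leftrightarrow q$ so as to match the monomial $z^{p-\frac12}w^{q-\frac12}$ on the right-hand side, the coefficient of $z^{p-\frac12}w^{q-\frac12}$ on the left is $(\sfa^q_{-p})^t$, i.e.\ its $(\alpha,\beta)$ entry is $(\sfa^q_{-p})_{\beta\alpha}$. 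On the right-hand side the coefficient of the same monomial is $-(S\,\sfa^p_{-q}\,S)$, with $(\alpha,\beta)$ entry $-(S\sfa^p_{-q}S)_{\alpha\beta}$. Equating gives exactly $(\sfa^p_{-q})_{\alpha\beta} = -(S\sfa^q_{-p}S)_{\beta\alpha}$ after one more harmless relabelling, matching \eqref{eq:antisymB2}; the $\sfd$ case is identical with the convention in \eqref{eq:adFourier} that $\sfd(z,w) = \sum \sfd^{-q}_p z^{-q-\frac12} w^{-p-\frac12}$.

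The main thing to be careful about is bookkeeping: the kernels have a simple pole on the diagonal $z=w$, but the numerator vanishes there, so $\sfa(z,w)$ and $\sfd(z,w)$ are genuinely holomorphic (in the relevant annuli) and the Fourier expansions \eqref{eq:adFourier} are legitimate; no distributional subtlety arises, and the manipulation $\tfrac{X}{z-w}\mapsto \tfrac{X^t}{w-z}=-\tfrac{X^t}{z-w}$ under simultaneous transpose-and-swap is purely algebraic. I expect no real obstacle beyond keeping the index relabelling straight; the only place a sign could go wrong is precisely the antisymmetry of the Cauchy denominator, which is what produces the minus signs in both \eqref{eq:antisymB} and \eqref{eq:antisymB2}.
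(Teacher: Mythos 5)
Your proposal is correct and follows essentially the same route as the paper: both arguments combine the orthogonality relation $\Psi_\pm S\Psi_\pm^t=S$ (equivalently $\Psi_\pm^{-1}=S\Psi_\pm^tS$) with the antisymmetry of the Cauchy denominator $z-w$, and then read off \eqref{eq:antisymB2} by matching Fourier coefficients after relabelling; whether one transposes first and then invokes \eqref{eq:ORHP}, as the paper does, or substitutes $\Psi_\pm^{-1}=S\Psi_\pm^tS$ into the kernel before transposing, as you do, is an immaterial reordering of the same computation. Your remark that the numerator vanishes on the diagonal, so the kernels are holomorphic and the expansions \eqref{eq:adFourier} are legitimate, is a correct (implicit in the paper) justification of the final step.
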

\begin{proof}
By direct computation:
\begin{equation}
\begin{split}
\sfa(w,z)^t & =\frac{\mathbb{I}_N-\Psi_+(z)^{-t}\Psi_+(w)^t}{w-z} \\
& \mathop{=}^{\eqref{eq:ORHP}}-\frac{\mathbb{I}_N-S\Psi_+(z)\Psi_+(w)^{-1}S}{z-w} \\
& =-S\sfa(z,w)S,
\label{eq:alemma}
\end{split}
\end{equation}
\begin{equation}
\begin{split}
\sfd(w,z)^t & =\frac{\mathbb{I}_N-\Psi_-(z)^{-t}\Psi_-(w)^t}{w-z} \\
& \mathop{=}^{\eqref{eq:ORHP}}-\frac{\mathbb{I}_N-S\Psi_+(z)\Psi_+(w)^{-1}S}{z-w} \\
&=-S\sfd(z,w)S.
\end{split}
\end{equation}
Eq.~\eqref{eq:antisymB2} is just the expression of the above equations  in terms of Fourier coefficients. For example, the LHS of \eqref{eq:alemma} reads
\be
\sfa(w,z)_{\beta\alpha}\mathop{=}^{\eqref{eq:adFourier}}\sum_{p,q\in\mathbb{Z}_+'}(\sfa^{p}_{-q})_{\beta\alpha}w^{p-\frac{1}{2}}z^{q-\frac{1}{2}},
\ee
while the RHS is
\bea
-S\sfa(z,w)S\  &\& =-\sum_{p,q\in\mathbb{Z}_+'}(S\sfa^p_{-q}S)_{\alpha\beta}z^{p-\frac{1}{2}}w^{q-\frac{1}{2}}\\
&\& =-\sum_{p,q\in\mathbb{Z}_+'}(S\sfa^q_{-p}S)_{\alpha\beta}z^{q-\frac{1}{2}}w^{p-\frac{1}{2}}.
\eea
Equating the two we obtain
\begin{equation}
(\sfa^{p}_{-q})_{\alpha\beta}=-(S \sfa^{q}_{-p}S)_{\beta\alpha}.
\end{equation}
The second equation of \eqref{eq:antisymB2} follows similarly.
\end{proof}


\section{Fermionic VEV representation and  orthogonal \hbox{reductions}}
\label{fermionic_widom_tau}


\subsection{$N$-component fermions and $L \GL(N)$ $\tau$-functions}
\label{sec:N_comp_fermion_LGLN_tau}

We recall  the  $N$-component fermionic Fock space $\FF^N$ of charged free fermions \cite{jimbo_miwa_83}. 
The vacuum vector in the fermionic charge $0$ sector $\FF^N_0$
is denoted $|0 \rangle$ and the dual vacuum vector $\langle 0 |$. The fermionic creation and annihilation
operators,  $\{\psi^{(\alpha)}_m, \psi^{\dag (\alpha)}_m\}_{m\in \Zbb,\  (\alpha) \in \{1, \dots , N\}}$,
satisfy the anticommutation relations
\be
[\psi^{(\alpha)}_m, \psi^{\dag (\beta)}_n ]_+ = \delta_{\alpha, \beta} \delta_{m, n},
\quad [\psi^{(\alpha)}_m, \psi^{\beta}_n ]_+  = [\psi^{\dag \alpha}_m, \psi^{\dag \beta}_n ]_+ =0,
\ee
and the vacuum annihilation conditions
\bea
\psi^{(\alpha)}_{-m-1} |0 \rangle &\&=0, \quad \psi^{\dag (\alpha)}_m |0 \rangle =0,  \cr
&\& \cr
\langle 0 |\psi^{(\alpha)}_{m}&\&=0, \quad \langle 0 |\psi^{\dag (\alpha)}_{-m-1}=0,  \quad m \ge 0.
\eea

In the following, we use a different notational convention, in which the creation and
annihilation operators are labelled by $\frac{1}{2}$-integers \hbox{$r \in \Zbb'$} and
 denoted
$\{\hat{\psi}^\alpha_r, \bar{\hat{\psi}}^\alpha_r\}_{r\in \Zbb', \alpha \in \{1, \dots, N\}}$.
These are related to the ones above as  follows:
\be
\hat{\psi}_r^\alpha := \psi_{-r-1/2}^{(\alpha)}, \quad \bar{\hat{\psi}}_r^\alpha := \psi^{\dag (\alpha )}_{r-1/2},
\quad r \in \Zbb', \quad \alpha \in \{1, \dots , N\}.
\ee
They satisfy the anticommutation relations
\be
[\hat{\psi}_r^\alpha,\bar{\hat{\psi}}_s^\beta ]_+=\delta^{\alpha\beta}\delta_{r,-s}, \quad
[\hat{\psi}_r^\alpha,\hat{\psi}_s^\beta ]_+=0 ,\quad 
[\bar{\hat{\psi}}_r^\alpha,\bar{\hat{\psi}}_s^\beta ]_+=0,
\label{eq:cfanticom}
\ee
and vacuum annihilation conditions:
\bea
\label{eq:ComplexVacuum}
\hat{\psi}_r^\alpha|0\rangle &\& = 0, \quad \bar{\hat{\psi}}_r^\alpha|0\rangle = 0 , \cr
&\& \cr
\langle 0 | \hat{\psi}_{-r}^\alpha &\&= 0, \quad \langle 0| \bar{\hat{\psi}}_{-r}^\alpha = 0 ,\quad r\in\Zbb_+',
\eea
where we denote the set of positive half-integers by $\Zbb'_+$. We also define the  fermionic free fields
as generating functions for these:
\bea
\label{eq:cffField}
\hat{\psi}^\alpha(z)&\&:=\sum_{p\in\mathbb{Z'}}\hat{\psi}^{\alpha}_{p}z^{-p-\frac{1}{2}}, \cr
 \bar{\hat{\psi}}^\alpha(z) &\& :=\sum_{p\in\mathbb{Z'}}\bar{\hat{\psi}}^{\alpha}_{p}z^{-p-\frac{1}{2}}.
\eea

We sometimes will use the notation $\hat{\psi}$ and $ \bar{\hat{\psi}}$ to denote the doubly indexed components
$\{\hat{\psi}^\alpha_p\}_{\alpha \in \{1, \dots, N\}, \   p \in \Zbb'}$ and $\{\bar{\hat{\psi}}^\alpha_p\}_{\alpha \in \{1, \dots, N\}, \   p \in \Zbb'}$,
respectively, and $\sfa$ and $\sfd$ to denote the doubly indexed matrices $\left\{(\sfa^p_{-q})_{\alpha \beta}\right\}$ and $\left\{(\sfd^{-q}_p)_{\alpha \beta}\right\}$,
where
\be
(\sfa^p_{-q})_{\alpha \beta}  = 0, \quad (\sfd^{-q}_{p})_{\alpha \beta} =0 \quad \text{if} \ q \in  -\Zbb'_+ \  \text{or} \  p \in -\Zbb'_+ .
\ee
Denote by $\sfa^p_{-q}$ and  $\sfd^{-q}_p$ the $N\times N$ matrices with elements $(\sfa^p_{-q})_{\alpha \beta}$ and  $(\sfd^{-q}_p)_{\alpha \beta}$.  

The Clifford module (Fock space) may be viewed as the irreducible module of the Clifford algebra obtained by applying negative 
mode operators to the vacuum state $|0\rangle$. An orthonormal basis $\{ |\{p_{\alpha,i},q_{\beta,j}\}\rangle\}$ for the Fock space is defined by
\begin{equation}
|\{p_{\alpha,i},q_{\beta,j}\}\rangle:=\prod_{\alpha=1}^N\left(\prod_{i=1}^{\#\{p_{\alpha,i}\}}\hat{\psi}_{-p_{\alpha,i}}^\alpha\prod_{j=1}^{\#\{q_{\alpha,j}\}}\bar{\hat{\psi}}_{-q_{\alpha,j}}^\alpha \right)|0\rangle,
\label{ortho_basis}
\end{equation}
where $\#\{p_{\alpha,i}\}$, $\#\{q_{\alpha,i}\}$ denote the cardinalities of these sets for each given $\alpha$, and
the order of the factors in the product is defined to be such that $\alpha, i$ and $j$ all increase to the right. 
We can interpret the indices $\{p_{\alpha,i},q_{\alpha,j}\}$ as the positions of particles and holes in an N-tuple 
of Maya diagrams $\vec{\sfm}:=\left(m_1,\dots,m_N \right)$ (see \cite{MJD} or \cite{HB},  App. A). Equivalently,
 we may represent these by {\em charged} Young diagrams  $\vec{\sfY}_{\nb}= \left((\sfY_1, n_1), \dots, (\sfY_N, n_N)\right)$, 
 as illustrated in Figure \ref{fig:charged_young_diag}.
 The  fermionic charge $n_\alpha$ associated to the Young diagram $\sfY_\alpha$ is the difference between the  number of positions occupied
by particles (occupied positive $1/2$ integers)  and holes (unoccupied negative $1/2$ integers) of the Maya diagram, 
\begin{equation}
n_\alpha:=\#\{p_{\alpha,i}\}- \#\{q_{\alpha,i}\},\quad \alpha =1, \dots, N,
\end{equation}
and is indicated in Figure \ref{fig:charged_young_diag} by the position of the red diagonal line.

The rule for determining the $p_{\alpha,i}$'s and $q_{\alpha,i}$'s is the following.
 Place the Young diagram in the 4th quadrant of the Cartesian plane, adjacent to the axes, 
with squares of unit size, and take its union with the 1st and 3rd quadrants.
Within this region,  draw the diagonal line segment of slope  $= -1$ through the lattice point $(-n_\alpha, 0)$.  
This then adds to the region of the Young diagram a right angle triangle with leg lengths  $= | n_\alpha |$, 
either in the first quadrant $(n_\alpha <0)$ or the third quadrant $(n_\alpha>0)$,
touching both axes, defining an extended  polygon.
The $p_{\alpha, i}$'s are then the horizontal areas, within the extended polygon, of each lattice row, 
to the right of the diagonal segment and the $q_{\alpha,i}$'s are the vertical areas  of each lattice column,
below the diagonal.

In the zero fermionic charge case,   $n_\alpha=0$,  $\{p_{\alpha,i}, q_{\alpha,i}\}_{1\le \alpha \le N \atop 1\le i \le r_\alpha}$ 
coincide with $ 1/2 \ +$ the Frobenius indices  $(a_1, \dots, a_{r_\alpha} | b_1, \dots, b_{r_\alpha})$
of the partition $\lambda^{(\alpha)}$ corresponding to the Young diagram $\sfY_\alpha$, where $r_\alpha$ is the number of elements
along the main diagonal (the Frobenius rank); i.e.,  the number of squares to the right and below the  main diagonal,  
respectively, as in Figure \ref{fig:ChargelessFrob}. 
In the charged case, the diagonal has to be shifted by the charge: one counts the number of squares to the right and below the shifted diagonal, including squares that lie outside of the Young diagram, as in Figure \ref{fig:ChargedFrob}.
\begin{figure}
\begin{center}
\begin{subfigure}{0.45\textwidth}
    \centering
\begin{tikzpicture}[scale=2.5]
\draw[thin,darkspringgreen,pattern color=darkspringgreen,pattern=north east lines] (0.,0.) -- (2,0.) -- (2,-0.4) -- (0.4,-0.4)--(0,0) (2.4,-0.2) node{$p_1=\frac{9}{2}$};
\draw[thin,blue,pattern color=blue,pattern=north east lines] (0.4,-0.4) -- (1.6,-0.4) -- (1.6,-0.8) -- (0.8,-0.8)  (2.,-0.6) node{$p_2=\frac{5}{2}$};
\draw[thin,red,pattern color=red,pattern=north east lines] (0.8,-0.8) -- (1.2,-0.8) -- (1.2,-1.2)  (1.6,-1.) node{$p_3=\frac{1}{2}$};
\draw[thin,violet,pattern color=violet,pattern=north east lines] (0.,0.) -- (0,-1.6) -- (0.4,-1.6) -- (0.4,-0.4)--(0,0) (0.1,-1.8) node{$q_1=\frac{7}{2}$};
\draw[thin,orange,pattern color=orange,pattern=north east lines] (0.4,-0.4) -- (0.4,-1.6) -- (0.8,-1.6) -- (0.8,-0.8)  (0.6,-1.8) node{$q_2=\frac{5}{2}$};
\draw[thin,purple,pattern color=purple,pattern=north east lines] (0.8,-0.8) -- (0.8,-1.6) -- (1.2,-1.6) -- (1.2,-1.2)  (1.1,-1.8) node{$q_3=\frac{3}{2}$};
\draw[clip](0,0) -- ++(2,0) -- ++(0,-0.4)-- ++(-0.4,0)-- ++(0,-0.4)-- ++(-0.4,0)-- ++(0,-0.8)--(0,-1.6) --cycle;
\draw[clip](0,0) -- ++(2,0) -- ++(0,-0.4)-- ++(-0.4,0)-- ++(0,-0.4)-- ++(-0.4,0)-- ++(0,-0.8)--(0,-1.6) --cycle;
\draw[scale=0.4](0,-1.6*3) grid (5,5);
\draw[ultra thick,red] (0,0)--(1.2,-1.2) ;
\end{tikzpicture}
\caption{Young diagram corresponding to the state 
 $\ph_{-\frac{9}{2}}\ph_{-\frac{5}{2}}\ph_{-\frac{1}{2}} \phb_{-\frac{7}{2}}\phb_{-\frac{5}{2}}\phb_{-\frac{3}{2}}|0\rangle $
 of  fermionic charge $n=0$.  The $p_i$'s and $q_i$'s are the areas of the
 horizontal and vertical strips of given colour.\\ \\ \\ \\}
 \label{fig:ChargelessFrob}
\end{subfigure}
\hfill
\begin{subfigure}{0.45\textwidth}
    \centering
\begin{tikzpicture}[scale=2.5]
\draw[thin,darkspringgreen,pattern color=darkspringgreen,pattern=north east lines] (-0.4,0.) -- (2,0.) -- (2,-0.4) -- (0,-0.4)--(-0.4,0) (2.4,-0.2) node{$p_1=\frac{11}{2}$};
\draw[thin,blue,pattern color=blue,pattern=north east lines] (0,-0.4) -- (1.6,-0.4) -- (1.6,-0.8) -- (0.4,-0.8)  (2.,-0.6) node{$p_2=\frac{7}{2}$};
\draw[thin,red,pattern color=red, pattern=north east lines] (0.4,-0.8) -- (1.2,-0.8) -- (1.2,-1.2) -- (0.8,-1.2)  (1.6,-1.) node{$p_3=\frac{3}{2}$};
\draw[thin,gray, pattern color=gray, pattern=north east lines] (0.8,-1.2) -- (1.2,-1.2) -- (1.2,-1.6) -- (0.8,-1.2)  (1.6,-1.4) node{$p_4=\frac{1}{2}$};
q coordinates/particles
\draw[thin,violet,pattern color=violet,pattern=north east lines] (0.,-0.4) -- (0,-1.6) -- (0.4,-1.6) -- (0.4,-0.8) (0.1,-1.8) node{$q_1=\frac{5}{2}$};
\draw[thin,orange,pattern color=orange,pattern=north east lines] (0.4,-0.8) -- (0.4,-1.6) -- (0.8,-1.6) -- (0.8,-1.2)  (0.6,-1.8) node{$q_2=\frac{3}{2}$};
\draw[thin,purple,pattern color=purple,pattern=north east lines] (0.8,-1.2) -- (0.8,-1.6) -- (1.2,-1.6)   (1.1,-1.8) node{$q_3=\frac{1}{2}$};
\draw[ultra thick,red] (-0.4,0)--(1.2,-1.6) ;
\draw[clip](0,0) -- ++(2,0) -- ++(0,-0.4)-- ++(-0.4,0)-- ++(0,-0.4)-- ++(-0.4,0)-- ++(0,-0.8)--(0,-1.6) --cycle;
\draw[clip](0,0) -- ++(2,0) -- ++(0,-0.4)-- ++(-0.4,0)-- ++(0,-0.4)-- ++(-0.4,0)-- ++(0,-0.8)--(0,-1.6) --cycle;
\draw[scale=0.4](0,-1.6*3) grid (5,5);
\end{tikzpicture}
\caption{The same Young diagram extended to correspond to the state
$ \ph_{-\frac{11}{2}}\ph_{-\frac{7}{2}}\ph_{-\frac{3}{2}}\ph_{-\frac{1}{2}} \phb_{-\frac{5}{2}}\phb_{-\frac{3}{2}}\phb_{-\frac{1}{2}}|0\rangle$ 
 of fermionic charge $n=1$. This is encoded in a shift of the diagonal with respect to which the $p_i$'s and $q_i$'s 
 are computed. \\ \\ \\  }
 \label{fig:ChargedFrob}
\end{subfigure}

\centering
\begin{subfigure}{0.6\textwidth}
\begin{tikzpicture}[scale=2.5]
\draw[ultra thick,red] (-2,0)--(0,-2) ;
\draw[dashed] (0,-1.6) -- (0,-2);
\draw[dashed] (-0.4,0) -- (-0.4,-1.6);
\draw[dashed] (-0.8,0) -- (-0.8,-1.2);
\draw[dashed] (-1.2,0) -- (-1.2,-0.8);
\draw[dashed] (-1.6,0) -- (-1.6,-0.4);
\draw[dashed] (-1.6,-0.4) -- (0,-0.4);
\draw[dashed] (-1.2,-0.8) -- (0,-0.8);
\draw[dashed] (-0.8,-1.2) -- (0,-1.2);
\draw[dashed] (-0.4,-1.6) -- (0,-1.6);
\draw[dashed,thin,darkspringgreen,pattern color=darkspringgreen,pattern=north east lines] (-2.,0.) -- (2,0.) -- (2,-0.4) -- (-1.6,-0.4)--(-2,0) (2.4,-0.2) node{$p_1=\frac{19}{2}$};
\draw[dashed,blue,pattern color=blue,pattern=north east lines] (-1.6,-0.4) -- (1.6,-0.4) -- (1.6,-0.8) -- (-1.2,-0.8) --  (-1.6,-0.4) (2,-0.6) node{$p_2=\frac{15}{2}$};
\draw[dashed,red,pattern color=red,pattern=north east lines] (-1.2,-0.8) -- (1.2,-0.8) -- (1.2,-1.2) -- (-0.8,-1.2)  (1.6,-1.) node{$p_3=\frac{11}{2}$};
\draw[dashed,gray,pattern color=gray,pattern=north east lines] (-0.8,-1.2) -- (1.2,-1.2) -- (1.2,-1.6) -- (-0.4,-1.6)  (1.6,-1.4) node{$p_4=\frac{9}{2}$};
\draw[dashed,orange,pattern color=orange,pattern=north east lines] (-0.4,-1.6) -- (0,-1.6) -- (0,-2)   (0.4,-1.8) node{$p_5=\frac{1}{2}$};
\draw[clip](0,0) -- ++(2,0) -- ++(0,-0.4)-- ++(-0.4,0)-- ++(0,-0.4)-- ++(-0.4,0)-- ++(0,-0.8)--(0,-1.6) --cycle;
\draw[clip](0,0) -- ++(2,0) -- ++(0,-0.4)-- ++(-0.4,0)-- ++(0,-0.4)-- ++(-0.4,0)-- ++(0,-0.8)--(0,-1.6) --cycle;
\draw[scale=0.4,ultra thick](0,-1.6*3) grid (5,5);
\end{tikzpicture}
\caption{Extended Young diagram corresponding to the state
 $\ph_{-\frac{19}{2}}\ph_{-\frac{15}{2}}\ph_{-\frac{11}{2}}\ph_{-\frac{9}{2}}\ph_{-\frac{1}{2}} |0\rangle $
 of fermionic charge $n=5$. The areas determining the $p_i$'s and $q_i$'s are computed
 by extending the Young diagram by a right triangle of base and height 
 equal to the fermionic charge.  \\ \\ \\ \\}
 \label{fig:Charge5}
\end{subfigure}
\end{center}
\caption{}
\label{fig:charged_young_diag}
\end{figure}


We can thus identify a state either by  the $N$-tuple of Maya diagrams $\vec{\sfm}=(m_1, \dots, m_N)$, 
or the set of half-integers $\{p_{\alpha,i},q_{\alpha,i}\}_{1 \le \alpha \le N}$, or the $N$-tuple of charged Young diagrams 
$\vec{\sfY}_{\nb} := \{(Y_1,n_1), \dots, (Y_N, n_N)\}$:
\begin{equation}
\vec{\sfm} \sim \{p_{\alpha,i},q_{\alpha,j}\}\sim \vec{\sfY}_\nb.
\end{equation}

To write $\tau_W[J]$ as a  fermionic VEV, consider the Fourier expansions of the kernels $\sfa(z,w)$ and $\sfd(z,w)$ in \eqref{eq:adKers} 
and define the ``dressed'' states
\begin{equation}
|\sfd\rangle:=e^{-\hat{\psi}\cdot\sfd\cdot\bar{\hat{\psi}}}|0\rangle,
\label{d_state}
\end{equation}
and
\begin{equation}
\langle\sfa|:=\langle 0|e^{-\bar{\hat{\psi}}\cdot\sfa\cdot\hat{\psi}}, 
\label{a_state}
\end{equation}
where the abbreviated notations $\bar{\hat{\psi}}\cdot\sfa\cdot\hat{\psi}$ and $\hat{\psi}\cdot\sfd\cdot\bar{\hat{\psi}}$ denote
\bea
\bar{\hat{\psi}}\cdot\sfa\cdot\hat{\psi} := \sum_{\alpha,\beta=1}^N\sum_{p,q\in\Zbb'_+}\bar{\hat{\psi}}_{-q}^\alpha\left(\sfa^{p}_{-q}\right)_{\alpha\beta}\hat{\psi}_{-p}^{\beta} \\
\hat{\psi}\cdot\sfd\cdot\bar{\hat{\psi}} := \sum_{\alpha,\beta=1}^N\sum_{p,q\in\Zbb'_+}\hat{\psi}_{-q}^\alpha\left(\sfd_{p}^{-q}\right)_{\alpha\beta}\bar{\hat{\psi}}_{-p}^{\beta} .
\label{psi_d_psi}
\eea
We then have:
\begin{proposition}
\label{ad_scalar_prod}
The Widom $\tau$-function admits the following representation as a product of free fermion states (see \cite{Gavrylenko2016a}):
\begin{equation}
\label{eq:TauWF}
\tau_W[J]=\langle\sfa|\sfd\rangle,
\end{equation}
and admits the following combinatorial expansion as a sum of products of finite determinants
\begin{equation} 
\tau_W=\sum_{\#(\vec{\textbf{p}})=\#(\vec{\textbf{q}})}(-)^{\#(\vec{\textbf{p}})} \det\left(\sfa^{\vec{\textbf{p}}}_{-\vec{\textbf{q}}} \right)\det\left(\sfd^{-\vec{\textbf{q}}}_{\vec{\textbf{p}}} \right).
\label{eq:WidomMinor}
\end{equation}
 where 
 \be
(\vec{\pb}, \vec{\qb}) = \{p_{\alpha,i},q_{\alpha, j}  \}_{\alpha =1, \dots, N},  \quad  \  q_{\alpha,i}, p_{\alpha,j} > 0.
 \ee

\end{proposition}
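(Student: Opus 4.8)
The plan is to deduce both statements from a single structural fact: the dressed states \eqref{d_state} and \eqref{a_state} are exponentials of \emph{quadratic} expressions in the fermion modes, so that every vacuum expectation value in sight is computable by Wick's theorem; the two displayed formulas are then two ways of organizing the resulting sum over contractions.

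For the identity $\tau_W[J]=\langle\sfa|\sfd\rangle$ I would first observe that only the modes $\hat{\psi}^\alpha_{-p}$, $\bar{\hat{\psi}}^\beta_{-q}$ with $p,q\in\Zbb'_+$ occur in $\hat{\psi}\cdot\sfd\cdot\bar{\hat{\psi}}$, and by \eqref{eq:ComplexVacuum} these are exactly the operators that create on $|0\rangle$; hence $|\sfd\rangle$ is (a formal completion of) a Bogoliubov, or ``squeezed'', vacuum, and dually $\langle\sfa|$ is built the same way from $\bar{\hat{\psi}}\cdot\sfa\cdot\hat{\psi}$. Expanding the two exponentials, $\langle\sfa|\sfd\rangle$ becomes a sum of vacuum expectation values of products of fermion bilinears; Wick's theorem evaluates each as a sum over perfect matchings of the operators, and the anticommutators \eqref{eq:cfanticom} force every surviving matching to join one $\hat{\psi}$ from the $\sfd$-exponential to one $\bar{\hat{\psi}}$ from the $\sfa$-exponential, contributing a matrix entry of $\sfa$ or $\sfd$. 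Grouping the matchings into cycles and summing over the total number of matched pairs reproduces term by term the Fredholm (von Koch) series of $\det_{\HH^N_+}(\mathbb{I}-\sfa\sfd)$ --- equivalently, this is the classical overlap formula for two squeezed states with pairing matrices $\sfa$ and $\sfd$, the minus signs in the exponents of \eqref{d_state}--\eqref{a_state} having been chosen so that the answer is $\det(\mathbb{I}-\sfa\sfd)$ rather than $\det(\mathbb{I}+\sfa\sfd)$. Together with the elementary $2\times2$ block-determinant identity already recorded in \eqref{eq:WidomTauDef}, this gives $\langle\sfa|\sfd\rangle=\tau_W[J]$.

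For the combinatorial expansion \eqref{eq:WidomMinor} I would insert the resolution of the identity furnished by the orthonormal basis \eqref{ortho_basis}, so that $\tau_W=\langle\sfa|\sfd\rangle=\sum_v\langle\sfa|v\rangle\langle v|\sfd\rangle$. Because each bilinear in $\hat{\psi}\cdot\sfd\cdot\bar{\hat{\psi}}$ creates exactly one $\hat{\psi}$ and one $\bar{\hat{\psi}}$, a state $v$ labelled by $(\vec{\pb},\vec{\qb})$ contributes only when $\#(\vec{\pb})=\#(\vec{\qb})=:k$, and then $\langle v|\sfd\rangle$ is fed only by the term $\tfrac{(-1)^k}{k!}(\hat{\psi}\cdot\sfd\cdot\bar{\hat{\psi}})^k$ of the exponential; summing over the $k!$ ways of pairing the creation operators $\hat{\psi}_{-p_{\alpha,i}}$ in $v$ with the $\bar{\hat{\psi}}_{-q_{\beta,j}}$ produced by this term cancels the $1/k!$ and assembles, up to a sign, the finite determinant $\det(\sfd^{-\vec{\qb}}_{\vec{\pb}})$; by the same argument $\langle\sfa|v\rangle$ equals $\det(\sfa^{\vec{\pb}}_{-\vec{\qb}})$ up to a sign. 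Multiplying the two and carefully collecting all the signs coming from permuting fermion operators into the prescribed order of \eqref{ortho_basis} (both inside $v$ and when combining $\langle\sfa|v\rangle$ with $\langle v|\sfd\rangle$) yields the stated prefactor $(-)^{\#(\vec{\pb})}$, hence \eqref{eq:WidomMinor}; this recovers the expansion of \cite{Cafasso2018} and the VEV identity of \cite{Gavrylenko2016a}.

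The obstacle I anticipate is bookkeeping rather than conceptual. First, the manipulations above --- well-definedness of the exponential dressing operators, Wick's theorem with infinitely many modes, and the term-by-term insertion of $\sum_v|v\rangle\langle v|$ --- must be justified analytically, which hinges on $\sfa$ and $\sfd$ being trace class (in fact smoothing, since $\Psi_\pm$ is analytic on either side of $S^1$ in \eqref{eq:adKers}), so that all the series converge absolutely. Second, and more delicately, one must verify that the cumulative fermionic sign in the last step is \emph{exactly} $(-)^{\#(\vec{\pb})}$ and not some other sign depending on the internal structure of $(\vec{\pb},\vec{\qb})$; this requires a careful comparison of the orderings in which the $\hat{\psi}$'s and $\bar{\hat{\psi}}$'s appear in $\langle v|$, in $|\sfd\rangle$, and in $\langle\sfa|$, and is the step I would write out in full detail.
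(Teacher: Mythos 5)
Your proposal is correct and follows essentially the same route as the paper: insert the resolution of the identity over the basis \eqref{ortho_basis}, evaluate each overlap $\langle\sfa|\vec{\sfm}\rangle$, $\langle\vec{\sfm}|\sfd\rangle$ by Wick's theorem to get the finite determinants, and identify the resulting sum with the minor expansion of the Fredholm determinant $\det_{\HH^N_+}(\mathbb{I}-\sfa\sfd)$. One small caveat on your parenthetical: flipping the signs of \emph{both} exponents in \eqref{d_state}--\eqref{a_state} leaves the overlap unchanged (only terms pairing equal powers of the two exponentials survive, so the explicit signs cancel in pairs), hence the prefactor $(-)^{\#(\vec{\pb})}$ arises entirely from the Wick/reordering signs you defer to the end, not from that choice of sign convention.
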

\begin{proof}
First,  insert a sum over a complete set of intermediate states:
\begin{equation}
\langle\sfa|\sfd\rangle=\sum_{\vec{\sfm}}\langle\sfa|\vec{\sfm}\rangle\langle\vec{\sfm}|\sfd\rangle,
\label{intermed_state_sum}
\end{equation}
and then note that
\begin{equation}
\langle\vec{\sfm}|\sfd\rangle=\langle0|\prod_{\alpha=1}^N\left(\prod_{i=1}^{\#\{p_{\alpha,i}\}}\bar{\hat{\psi}}_{-p_{\alpha,i}}^\alpha\prod_{j=1}^{\#\{q_{\alpha,j}\}}\hat{\psi}_{-q_{\alpha,j}}^\alpha \right)\sum_{n=0}^\infty\frac{(-\hat{\psi}\cdot\sfd\cdot\bar{\hat{\psi}})^n}{n!}|0\rangle.
\end{equation}
Using Wick's theorem to contract all the fermions, we obtain
\begin{equation}
\langle\vec{\sfm}|\sfd\rangle=(-)^{\#\{p_{i,\beta}\}}\det\left(\sfd^{-\vec{\textbf{q}}}_{\vec{\textbf{p}}}\right).
\end{equation}
and, similarly,
\begin{equation}
\langle\sfa|\vec{\sfm}\rangle=\det\left(\sfa^{\vec{\textbf{p}}}_{\vec{-\textbf{q}}}\right),
\end{equation}
 labelled in terms of N-tuples $\vec{\sfm}$ of Maya diagrams $\{m_\alpha=\{p_{\alpha,i},q_{\alpha, j} \} \}_{\alpha = 1, \dots,N}$.
 Substituting these in (\ref{intermed_state_sum}) gives (\ref{eq:WidomMinor}), which is
 the minor expansion of the Widom $\tau$-function from \cite{Cafasso2017}. 
 \end{proof}


\subsection{$N$-component orthogonal fermions and $L_SS O(N)$ $\tau$-functions}
\label{sec:OrtFerm}
\label{[orthog_fermions_O_N_tau}
If the Riemann Hilbert problem takes values in an orthogonal loop group, with jump matrix and solutions satisfying \eqref{eq:ORHP}, the Widom $\tau$-function is not the 
object of fundamental interest, since  the determinant \eqref{eq:WidomTauDef} then turns out to be the square of a Pfaffian.
This is best seen by introducing a new  basis for the fermionic creation and annihilation operators, which we call {\em orthogonal fermions}.
\begin{definition}[Orthogonal Fermions]
Depending on the choice of matrix involution $S$ entering in the quadratic form \eqref{eq:PairingO},
we define (S-)orthogonal fermion fields, $\{\chi^\alpha(z;S),\hat{\chi}^\alpha(z;S)\}$  in terms charged fermion fields as
\begin{align}\label{eq:ofField}
\chi^{\alpha}(z;S):=\frac{1}{\sqrt{2}}\left(\ph^{\alpha}(z)+(S\phb)^{\alpha}(z)\right), && \hat{\chi}^{\alpha}(z;S):=\frac{i}{\sqrt{2}}\left(\ph^{\alpha}(z)-(S\phb)^{\alpha}(z)\right),
\end{align}
where $(S\phb)^{\alpha}(z)$ denotes the $\alpha$ component of the  product of the matrix $S$ with the column vector $\phb(z)$ 
whose components are $\phb^{\alpha}(z)$.
\end{definition}

In terms of their Fourier expansion coefficients
\be
\chi^{\alpha}(z;S)=\sum_{p\in\mathbb{Z}'}\chi^{\alpha}_{p}(S)z^{-p-\frac{1}{2}}, \quad \hat{\chi}^{\alpha}(z;S)=\sum_{p\in\mathbb{Z}'}\chi^{\alpha}_{p}(S)z^{-p-\frac{1}{2}},
\ee
\eqref{eq:ofField} is equivalent to
\be
\label{ofModes}
\chi^{\alpha}_{p}(S)=\frac{1}{\sqrt{2}}\left(\ph^\alpha_p+(S\phb)^\alpha_p \right), \quad
\hat{\chi}^{\alpha}_{p}(S)=\frac{i}{\sqrt{2}}\left(\ph^\alpha_p-(S\phb)^\alpha_p \right).
\ee
The anticommutators between the orthogonal fermions follow from those \eqref{eq:cfanticom} for  charged fermions
\be
\label{eq:chianticom}
[\chi^{\alpha}_{p}(S),\chi^{\beta}_{q}(S) ]_+
=[\hat{\chi}^{\alpha}_{p}(S),\hat{\chi}^{\beta}_{q}(S) ]_+
=S^{\alpha\beta}\delta_{p,-q}, 
\quad [\chi^{\alpha}_{p}(S),\hat{\chi}^{\beta}_{q}(S) ]_+=0,
\ee
which generates the Clifford algebra corresponding to the quadratic form \eqref{eq:PairingO} on the 
space spanned by $\{\chi^{\alpha}_{p}(S), \hat{\chi}^{\alpha}_{p}(S)\}$.
The vacuum annihilation conditions also follow from those for charged fermions:
\be
\label{eq:chivacuum}
\chi^{\alpha}_{p}(S)|0\rangle=\hat{\chi}^{\alpha}_{p}(S)|0\rangle=0, \quad p>0.
\ee
\begin{remark}
The simplest case is $S=\mathbb{I}$. Equation \eqref{eq:ofField} is then the change of variables between (the chiral part of) a Dirac fermion, 
corresponding to the field $\hat{\psi}(z),\bar{\hat{\psi}}(z)$, and two Weyl-Majorana fermions, known as real fermions in the physical literature, 
given by the fields $\chi(z;\mathbb{I}),\hat{\chi}(z;\mathbb{I})$.
\end{remark}
In the following, to simplify notation we will drop the explicit dependence of fermions on the matrix $S$ if not needed. Since the two species of 
orthogonal fermions $\chi,\,\hat{\chi}$ anticommute, we can define two smaller Fock spaces spanned by application of  $\chi$ or $\hat{\chi}$
to $|0\rangle$. In view of the anticommutation relations \eqref{eq:chianticom} and the vacuum condition \eqref{eq:chivacuum}, the basis states 
in these smaller spaces are labeled by $N$ strictly decreasing sequences of positive half-integers 
\be
p_{\alpha,1}>\dots>p_{\alpha,n_\alpha} \ge 1/2 \quad\alpha=1,\dots N
\ee
or equivalently, by an $N$-tuple of strict partitions (denoted $SP$), including a possible $0$ part:
\be
\label{eq:stricthi}
    \vec{\lambda}:=\left(\lambda^{(1)},\dots,\lambda^{(N)} \right)\in SP, \quad \lambda^{(\alpha)}:=\left(p_{\alpha,1}-\frac{1}{2},\dots,p_{\alpha,n_{\alpha}}-\frac{1}{2}\right),
\ee 
where $n_\alpha :=\#(\lambda^{(\alpha)})$ is now  the cardinality of the strict partition $\lambda^{(\alpha)}$.
We denote by
\be
\#(\vec{\lambda}) := \sum_{\alpha=1}^N n_\alpha
\ee
the total cardinality of the $N$-tuple of strict partitions $\vec{\lambda}$. 
The basis states are then denoted
\begin{equation}\label{eq:StateReal}
|\vec{\lambda}\rangle=\overrightarrow{\prod_{\alpha=1}^N}\chi^{\alpha}_{-p_{\alpha,1}}\dots\chi^{\alpha}_{-p_{\alpha,n_\alpha}}|0\rangle,
\end{equation}
with an identical formula for the hatted fermions $\hat{\chi}^{\alpha}_{-p_{\alpha,i}}$.
 The arrow over the product symbol means that the ordering is such that the  $\alpha$'s are increasing to the right.
 For example, for $N=2$, with strict partitions $\lambda_1=(2,0),\,\lambda_2=(1)$, the corresponding state will be
\begin{equation}
|\vec{\lambda}\rangle=|(2,0),\,(1)\rangle=\chi^{1}_{-\frac{5}{2}}\chi^{1}_{-\frac{1}{2}}\chi^{2}_{-\frac{3}{2}}|0\rangle.
\end{equation}

Define the ``dressed'' states
\bea
|\sfd_O\rangle &\&:= e^{-\frac{1}{2}\chi\cdot\sfd S\cdot\chi}|0\rangle,  \quad |\sfd_{\hat{O}}\rangle:= e^{-\frac{1}{2}\hat{\chi}\cdot\sfd S\cdot\hat{\chi}}|0\rangle,
\label{eq:aO}
\\
\langle\sfa_O| &\&:=\langle 0| e^{-\frac{1}{2}\chi\cdot S\sfa\cdot\chi}, \quad \langle\sfa_{\hat{O}}|:= \langle 0|e^{-\frac{1}{2}\hat{\chi}\cdot S\sfa\cdot\hat{\chi}},
\label{eq:dO}
\eea
where
\bea
\chi\cdot S \sfa \cdot \chi &\&:= \sum_{\alpha,\beta=1}^N\sum_{p,q\in\Zbb'_+}\chi^\alpha_p\left(S\sfa^{p}_{-q}\right)_{\alpha\beta}\chi^\beta_q, \qquad
\hat{\chi}\cdot S\sfa \cdot \hat{\chi} := \sum_{\alpha,\beta=1}^N\sum_{p,q\in\Zbb'_+}\hat{\chi}^\alpha_p\left( S\sfa^{p}_{-q} \right)_{\alpha\beta}\hat{\chi}^\beta_q, \cr
&\& \\
\chi \cdot \sfd S \cdot \chi &\&:=\sum_{\alpha,\beta=1}^N \sum_{p,q\in\Zbb'_+} \chi^\alpha_{-q}\left(\sfd_{p}^{-q}S_{\alpha\beta} \chi^\beta_{-p} \right) , \qquad
\hat{\chi}\cdot \sfd S \cdot\hat{\chi} :=\sum_{\alpha,\beta=1}^N \sum_{p,q\in\Zbb'_+} \hat{\chi}^\alpha_{-q}\left(\sfd_{p}^{-q}S_{\alpha\beta} \hat{\chi}^\beta_{-p} \right)\cr
&\&
\eea

\begin{theorem}
\label{thm:RealFTau}
Let $\tau_W[J]$ be the Widom $\tau$-function \eqref{eq:WidomTauDef}. If the jump $J$ in the Riemann-Hilbert factorization (\ref{eq:JumpRHP}) satisfies 
the orthogonal reduction condition (\ref{eq:ORHP}), then $\tau_W[J]   $ can be written as a square
\begin{equation}
\tau_W[J]=\tau_O[J]^2,
\label{tau_widom_tau_pfaff}
\end{equation}
where 
\be
\tau_O[J]:=\langle \sfa_O|\sfd_O\rangle = \langle \sfa_
{\hat{O}}|\sfd_{\hat{O}}\rangle.
\label{eq:TauD}
\ee
\end{theorem}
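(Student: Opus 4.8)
The plan is to reduce the statement to the fermionic representation $\tau_W[J]=\langle\sfa|\sfd\rangle=\langle0|e^{-\phb\cdot\sfa\cdot\ph}e^{-\ph\cdot\sfd\cdot\phb}|0\rangle$ of Proposition~\ref{ad_scalar_prod}, and to pass from the charged fermions $\ph,\phb$ to the orthogonal fermions $\chi,\hat\chi$ of \eqref{eq:ofField}. The structural fact to be extracted is that, once the orthogonal reduction \eqref{eq:ORHP} holds, the two quadratic ``dressing'' exponents $\phb\cdot\sfa\cdot\ph$ and $\ph\cdot\sfd\cdot\phb$ each split as the sum of a form built purely from the $\chi$'s and a form built purely from the $\hat\chi$'s, with the two species completely decoupled; this decoupling then forces $\tau_W[J]$ to be a product of two factors, each of which equals $\tau_O[J]$.

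First I would invert the linear change of variables \eqref{eq:ofField}, \eqref{ofModes}: since $S^t=S$ and $S^2=\mathbb I_N$, one gets, mode by mode,
\be
\ph^\alpha=\tfrac{1}{\sqrt{2}}\big(\chi^\alpha-i\hat\chi^\alpha\big),\qquad
\phb^\alpha=\tfrac{1}{\sqrt{2}}\big(S(\chi+i\hat\chi)\big)^\alpha .
\ee
Substituting these into $\ph\cdot\sfd\cdot\phb$ and collecting terms yields a $\chi\chi$-part, a $\hat\chi\hat\chi$-part and a mixed $\chi\hat\chi$-part. After moving all $\chi$'s to the left of all $\hat\chi$'s in the mixed part (anticommuting $\chi$ past $\hat\chi$) and relabelling the dummy indices $p\leftrightarrow q$, $\alpha\leftrightarrow\beta$, the coefficient of the mixed part becomes proportional to $(\sfd^{-q}_pS)_{\alpha\beta}+(\sfd^{-p}_qS)_{\beta\alpha}$, which vanishes identically by the skew-symmetry relations \eqref{eq:antisymB}--\eqref{eq:antisymB2} (together with $S^t=S$, $S^2=\mathbb I_N$); the same computation kills the mixed part of $\phb\cdot\sfa\cdot\ph$, whose coefficient is $(S\sfa^p_{-q})_{\alpha\beta}+(S\sfa^q_{-p})_{\beta\alpha}$. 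What survives is precisely
\be
\ph\cdot\sfd\cdot\phb=\tfrac12\,\chi\cdot\sfd S\cdot\chi+\tfrac12\,\hat\chi\cdot\sfd S\cdot\hat\chi,\qquad
\phb\cdot\sfa\cdot\ph=\tfrac12\,\chi\cdot S\sfa\cdot\chi+\tfrac12\,\hat\chi\cdot S\sfa\cdot\hat\chi,
\ee
the bilinears on the right being exactly those entering the definitions \eqref{eq:aO}--\eqref{eq:dO} of $|\sfd_O\rangle,|\sfd_{\hat O}\rangle,\langle\sfa_O|,\langle\sfa_{\hat O}|$. This index bookkeeping, and in particular the cancellation of the mixed terms — the one and only place where the orthogonal reduction is used — is the step I expect to be the main obstacle, even though each individual manipulation is routine.

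Granting these identities, the remainder is formal. Since $\chi$ and $\hat\chi$ anticommute \eqref{eq:chianticom}, any operator quadratic in the $\chi$'s commutes with any operator quadratic in the $\hat\chi$'s, so $e^{-\ph\cdot\sfd\cdot\phb}=e^{-\frac12\chi\cdot\sfd S\cdot\chi}e^{-\frac12\hat\chi\cdot\sfd S\cdot\hat\chi}$ and likewise $e^{-\phb\cdot\sfa\cdot\ph}=e^{-\frac12\chi\cdot S\sfa\cdot\chi}e^{-\frac12\hat\chi\cdot S\sfa\cdot\hat\chi}$. In $\langle0|e^{-\phb\cdot\sfa\cdot\ph}e^{-\ph\cdot\sfd\cdot\phb}|0\rangle$ the $\hat\chi$-exponential coming from the $\sfa$-factor then commutes past the $\chi$-exponential coming from the $\sfd$-factor, so $\langle\sfa|\sfd\rangle$ rearranges into $\langle0|\,\big(e^{-\frac12\chi\cdot S\sfa\cdot\chi}e^{-\frac12\chi\cdot\sfd S\cdot\chi}\big)\big(e^{-\frac12\hat\chi\cdot S\sfa\cdot\hat\chi}e^{-\frac12\hat\chi\cdot\sfd S\cdot\hat\chi}\big)|0\rangle$. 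Expanding the exponentials and applying Wick's theorem, every pairwise contraction $\langle0|\chi^\alpha_p\hat\chi^\beta_q|0\rangle$ between the two species vanishes (their anticommutator is zero and one of the two operators always annihilates $|0\rangle$ or $\langle0|$), so the contraction matrix is block diagonal and the vacuum expectation value factorizes as
\be
\tau_W[J]=\langle\sfa_O|\sfd_O\rangle\,\langle\sfa_{\hat O}|\sfd_{\hat O}\rangle .
\ee

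It remains to note that $\chi^\alpha_p\mapsto\hat\chi^\alpha_p$ extends to an isomorphism between the two Clifford subalgebras: by \eqref{eq:chianticom} they satisfy the same anticommutation relations, and by \eqref{eq:chivacuum} they satisfy the same vacuum annihilation conditions, so this substitution leaves every vacuum expectation value unchanged. Applied to the first factor above it reproduces the second, whence $\langle\sfa_O|\sfd_O\rangle=\langle\sfa_{\hat O}|\sfd_{\hat O}\rangle=:\tau_O[J]$ and $\tau_W[J]=\tau_O[J]^2$, which is \eqref{tau_widom_tau_pfaff}--\eqref{eq:TauD}. No further analytic input is needed: convergence of the expansions and the relevant trace-class properties are inherited from the corresponding statements for $\tau_W[J]$ in Section~\ref{widom_constant_tau}.
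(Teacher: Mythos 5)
Your proposal is correct and follows essentially the same route as the paper's proof: rewrite the quadratic dressing exponents in the orthogonal fermion basis, use the skew-symmetry \eqref{eq:antisymB2} coming from \eqref{eq:ORHP} to kill the mixed $\chi\hat\chi$ terms, and then factorize the vacuum expectation value over the two mutually anticommuting species. The only difference is that you spell out the factorization (via Wick contractions) and the equality of the two factors (via the Clifford isomorphism $\chi\mapsto\hat\chi$) in more detail than the paper, which cites a factorization lemma for these steps.
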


\begin{proof}
We first make the change of fermionic basis \eqref{ofModes} in the expression \eqref{eq:TauWF} for the $\tau$-function, using $S^{-1}=S$:
\bea
\sum_{\alpha,\beta}\sum_{p,q>0}(d^{-q}_{p})_{\alpha\beta}\hat{\psi}^\alpha_{-q}\overline{\hat{\psi}}^\beta_{-p} &\& =\frac{1}{2}\sum_{\alpha,\beta}\sum_{p,q>0}(d^{-q}_{p})_{\alpha\beta}\left(\chi^\alpha_{-q}+i\hat{\chi}^\alpha_{-q}\right)\left((S\chi)^\beta_{-p}-i(S\hat{\chi})^\beta_{-p}\right) \cr
&\& =\frac{1}{2}\sum_{\alpha,\beta}\sum_{p,q>0}(d^{-q}_{p}S)_{\alpha\beta}\left(\chi^\alpha_{-q}\chi^\beta_{-p}+\hat{\chi}^\alpha_{-q}\hat{\chi}^\beta_{-p} \right) \cr
&\& -\frac{i}{2}\sum_{\alpha,\beta}\sum_{p,q>0}(d^{-q}_{p}S)_{\alpha\beta}\left(\chi^\alpha_{-q}\hat{\chi}^\beta_{-p}-\hat{\chi}^\alpha_{-q}\chi^\beta_{-p} \right).
\eea
The last line vanishes due to the antisymmetry \eqref{eq:antisymB2} of $\sfd$, stemming from the orthogonality condition (\ref{eq:ORHP}), since the combination
\be
\chi^\alpha_{-q}\hat{\chi}^\beta_{-p}-\hat{\chi}^\alpha_{-q}\chi^\beta_{-p} =\chi^\alpha_{-q}\hat{\chi}^\beta_{-p}+\chi^\beta_{-p}\hat{\chi}^\alpha_{-q} 
\ee
is symmetric under the exchange $(q,\alpha)\leftrightarrow(p,\beta)$. By eq.~(\ref{psi_d_psi}), we are therefore left with
\be
\hat{\psi}\cdot\sfd\cdot\overline{\hat{\psi}}=\frac{1}{2}\left(\chi\cdot\sfd S\cdot\chi+\hat{\chi}\cdot\sfd S\cdot\hat{\chi} \right).
\ee
A completely analogous relation holds for $\sfa$, leading to
\be
\overline{\hat{\psi}}\cdot\sfa\cdot\hat{\psi}=\frac{1}{2}\left(\chi\cdot S\sfa\cdot\chi+\hat{\chi}\cdot S\sfa \cdot\hat{\chi} \right).
\ee
We then have
\bea\label{eq:TauOFact}
\tau_W[J] &\& =\langle\sfa|\sfd\rangle=\langle0|e^{-\frac{1}{2}\left(\chi\cdot S\sfa\cdot\chi+\hat{\chi}\cdot S\sfa\cdot\hat{\chi} \right)}e^{-\frac{1}{2}\left(\chi\cdot\sfd S\cdot\chi+\hat{\chi}\cdot\sfd S\cdot\hat{\chi} \right)}|0\rangle\cr
&\&=\langle\sfa_{O}|\sfd_{O}\rangle\langle\sfa_{\hat{O}}|\sfd_{\hat{O}}\rangle=\left(\langle\sfa_{O}|\sfd_{O}\rangle \right)^2 =\left(\tau_O[J]\right)^2,
\eea
where
\bea
\tau_O[J]&\&:=\langle\sfa_{O}|\sfd_{O}\rangle=\langle0|e^{-\frac{1}{2}\chi\cdot S\sfa\cdot\chi}e^{-\frac{1}{2}\chi\cdot\sfd S\cdot\chi}|0\rangle \cr
      &\&=\langle\sfa_{\hat{O}}|\sfd_{\hat{O}}\rangle=\langle0|e^{-\frac{1}{2}\hat{\chi}\cdot S\sfa\cdot\hat{\chi}}e^{-\frac{1}{2}\chi\hat{\cdot}\sfd S\cdot\hat{\chi}}|0\rangle .
\eea
The factorization of expectation values in the second line of \eqref{eq:TauOFact} follows from the fact that 
$\{\chi^\alpha_p\}$ and $\{\hat{\chi}^\alpha_p\}$ are mutually anticommuting sets of fermionic operators, each leading to its own ``restricted'' Fock space. 
(See for example the factorization Lemma 2.2 in \cite{HO1}, and \cite{you1989polynomial} for the analogue of this statement in the BKP hierarchy.)
\end{proof}
The formulation in terms of free fermions leads to the following general form of the combinatorial expansion for this $\tau$-function:
\begin{theorem}
\label{thm:PfMinor}
The $\tau$-function $\tau_O[J]$ has the following combinatorial expansion:
\begin{equation}\label{eq:CombExpD}
\tau_O[J]=\sum_{\vec{\lambda}\in SP}\Pf(S\sfa_{\vec{\lambda}})\Pf(\sfd S_{\vec{\lambda}}),
\end{equation}
where $\sfa_{\vec{\lambda}}, \sfd_{\vec{\lambda}}$ are the $\#(\lambda)\times\#(\lambda)$ submatrices of $\sfa,\sfd$ labeled by the $N$-tuple of strict partition $\vec{\lambda}$. 
\end{theorem}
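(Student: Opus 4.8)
The plan is to run the argument of Proposition~\ref{ad_scalar_prod} with the single species of orthogonal fermions $\chi$ replacing the conjugate pair $(\hat\psi,\bar{\hat\psi})$: contracting $\chi$'s with $\chi$'s, Wick's theorem delivers Pfaffians rather than determinants. Starting from $\tau_O[J]=\langle\sfa_O|\sfd_O\rangle$ as in \eqref{eq:TauD}, with $|\sfd_O\rangle=e^{-\frac12\chi\cdot\sfd S\cdot\chi}|0\rangle$ and $\langle\sfa_O|=\langle0|e^{-\frac12\chi\cdot S\sfa\cdot\chi}$, I would first isolate the three facts that drive the calculation. First, the doubly-indexed matrices $S\sfa$ and $\sfd S$ --- with rows labelled by $(\alpha,p)$, columns by $(\beta,q)$, $p,q\in\Zbb'_+$, and entries $(S\sfa^p_{-q})_{\alpha\beta}$ and $(\sfd^{-q}_pS)_{\alpha\beta}$ --- are \emph{antisymmetric}; this is precisely the Fourier form \eqref{eq:antisymB2} of the orthogonal reduction, and it is what makes the Pfaffians of their principal submatrices meaningful. (Accordingly, $S\sfa_{\vec\lambda}$ and $\sfd S_{\vec\lambda}$ in the statement are to be read as the principal submatrices of the antisymmetric matrices $S\sfa$, $\sfd S$ cut out by the pairs $(\alpha,p_{\alpha,i})$ recorded in $\vec\lambda$.) Second, by \eqref{eq:chianticom}--\eqref{eq:chivacuum}, among the operators that actually occur --- only positive modes in $\langle\sfa_O|$ and only negative modes in $|\sfd_O\rangle$ --- the sole nonvanishing vacuum two-point function is the cross one $\langle0|\chi^\alpha_p\chi^\beta_{-q}|0\rangle=S^{\alpha\beta}\delta_{pq}$, while $\langle0|\chi^\alpha_p\chi^\beta_{p'}|0\rangle=\langle0|\chi^\alpha_{-q}\chi^\beta_{-q'}|0\rangle=0$ for $p,p',q,q'>0$. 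Third, within each sector the $\chi$'s mutually anticommute and square to zero, so each exponential is a bona fide Gaussian in Grassmann-type generators.

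Next I would expand each exponential on the vacuum via the elementary identity: for anticommuting, square-zero generators $\{\xi_I\}$ and an antisymmetric matrix $M$,
\[
e^{-\frac12\sum_{I,J}M_{IJ}\,\xi_I\xi_J}=\sum_{T}\,(-1)^{\varepsilon(T)}\,\Pf(M_T)\,\xi_T ,
\]
the sum running over finite index sets $T$ of even cardinality, with $\xi_T$ the ordered product of the $\xi_I$, $I\in T$, $M_T$ the principal submatrix, and $\varepsilon(T)$ a reordering sign. Taking $\xi_I=\chi^\alpha_{-q}$, the admissible $T$ are, by \eqref{eq:chianticom}, \eqref{eq:chivacuum} and \eqref{eq:StateReal}, exactly the $N$-tuples of strict partitions $\vec\mu$ of even total cardinality, so
\[
|\sfd_O\rangle=\sum_{\vec\mu\in SP}(-1)^{\varepsilon(\vec\mu)}\,\Pf(\sfd S_{\vec\mu})\,|\vec\mu\rangle ,
\]
and likewise, with $\xi_I=\chi^\alpha_p$, $\langle\sfa_O|=\sum_{\vec\nu\in SP}(-1)^{\varepsilon(\vec\nu)}\Pf(S\sfa_{\vec\nu})\langle\vec\nu|$, where $\langle\vec\nu|$ is obtained from $\langle0|$ by applying the positive modes in the order opposite to \eqref{eq:StateReal}.

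Contracting the two expansions then yields
\[
\tau_O[J]=\sum_{\vec\nu,\vec\mu\in SP}(-1)^{\varepsilon(\vec\nu)+\varepsilon(\vec\mu)}\,\Pf(S\sfa_{\vec\nu})\,\Pf(\sfd S_{\vec\mu})\,\langle\vec\nu|\vec\mu\rangle .
\]
By the second fact and Wick's theorem, $\langle\vec\nu|\vec\mu\rangle$ is a determinant whose entries are matrix elements of $S$; it vanishes unless the half-integer content of $\vec\nu$ equals that of $\vec\mu$, collapsing the double sum to a single sum over $\vec\lambda\in SP$. When $S=\mathbb{I}$ the states \eqref{eq:StateReal} are orthonormal, $\langle\vec\nu|\vec\mu\rangle=(-1)^{\varepsilon(\vec\lambda)}\delta_{\vec\nu,\vec\mu}$, and \eqref{eq:CombExpD} follows at once --- this is the classical neutral-fermion (BKP-type) Pfaffian $\tau$-function expansion, cf.\ \cite{you1989polynomial,HO1}. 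For a general symmetric involution $S$ (orthogonally diagonalizable, eigenvalues $\pm1$) I would either pass to a basis diagonalizing $S$ and invoke the covariance of both sides of \eqref{eq:CombExpD} under an orthogonal change of the $\chi$-basis, or check directly that the $S$-entries carried by $\langle\vec\nu|\vec\mu\rangle$ recombine with $\Pf(S\sfa_{\vec\nu})$ and $\Pf(\sfd S_{\vec\mu})$ to leave precisely $\sum_{\vec\lambda}\Pf(S\sfa_{\vec\lambda})\Pf(\sfd S_{\vec\lambda})$.

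I expect the only genuinely delicate step to be this last one: reconciling the reordering signs $(-1)^{\varepsilon(\cdot)}$ from the two Pfaffian expansions with the sign and $S$-dependence of the Fock-space overlap $\langle\vec\nu|\vec\mu\rangle$, which forces one to fix a single consistent ordering convention throughout (the one implicit in \eqref{eq:StateReal}). Everything else is a transcription of the steps already carried out for $\tau_W[J]$ in Proposition~\ref{ad_scalar_prod}. A convenient consistency check is that squaring the resulting expansion must, via $\tau_W[J]=\tau_O[J]^2$ from Theorem~\ref{thm:RealFTau}, reproduce the determinantal expansion \eqref{eq:WidomMinor}.
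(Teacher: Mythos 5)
Your proposal is correct and follows essentially the same route as the paper: both arguments expand the Gaussian dressed states over the strict-partition-labelled $\chi$-basis and apply Wick's theorem / Berezin integration, turning each factor into a Pfaffian of a principal submatrix of the antisymmetric matrices $S\sfa$ and $\sfd S$, with the reordering signs cancelling between the two factors. The one point where you are more explicit than the paper is the $S$-dependence of the overlaps $\langle\vec\nu|\vec\mu\rangle$ --- the paper simply inserts $\sum_{\vec\lambda}|\vec\lambda\rangle\langle\vec\lambda|$ and absorbs the contraction factors $S^{\alpha\beta}$ into its Berezin-integral computation of each matrix element --- and you correctly identify this bookkeeping (which is where the placement of $S$ in $\Pf(S\sfa_{\vec\lambda})\Pf(\sfd S_{\vec\lambda})$ originates) as the only delicate step; completing it is routine.
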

\begin{proof}
Inserting an intermediate sum over a complete set of states in the fermionic expression \eqref{eq:TauD} for $\tau_O[J]$ gives:
\begin{equation}
\tau_O[J]=\sum_{\vec{\lambda}\in SP}\langle0|e^{-\frac{1}{2}\chi\cdot S\sfa\cdot\chi}|\vec{\lambda}\rangle\langle\vec{\lambda}|e^{-\frac{1}{2}\chi\cdot\sfd S\cdot\chi}|0\rangle.
\end{equation}
We now show that
\begin{equation}
\langle\vec{\lambda}|e^{-\frac{1}{2}\chi\cdot\sfd S\cdot\chi}|0\rangle=\langle0|\prod_{i=1}^{|p_{\alpha,i}|}\chi^\alpha_{p_{\alpha,i}}\sum_{n=0}^{\infty}\frac{1}{n!}\left(-\frac{1}{2}\chi\cdot\sfd S\cdot\chi \right)^n|0\rangle=(-1)^{\frac{\#(\lambda)}{2}}\Pf(\sfd_{\vec{\lambda}}),
\end{equation}
where $\sfd_{\vec{\lambda}}$ (which is necessarily of even dimension) is the square submatrix identified by the strict $N$-tuple of partitions $\vec{\lambda}$. The Pfaffian factor comes from the Wick contractions of all the fermions, while the sign comes from the exponential. This can be computed from the Grassmann algebra valued (Berezinian) Gaussian integral
\begin{equation}
\begin{split}
\langle&\vec{\lambda}|e^{-\frac{1}{2}\chi\cdot\sfd\cdot\chi}|0\rangle  =\int\overleftarrow{\prod_{\alpha=1}^{N}}\left[d\chi_{p_{\alpha,n_{\alpha}}}^{\alpha}\dots d\chi_{p_{\alpha,1}}^{\alpha}\right]e^{-\frac{1}{2}\chi\cdot\sfd S\cdot\chi}\\
& =\sum_{k=0}^{\infty}\frac{1}{k!}\left(-\frac{1}{2} \right)^k\int\overleftarrow{\prod_{\alpha=1}^{N}}\left[d\chi_{p_{\alpha,n_{\alpha}}}^{\alpha}\dots d\chi_{p_{\alpha,1}}^{\alpha}\right]\left(\sum_{\alpha\beta=1}^{N}\sum_{p,q>0}\left(\sfd^{-p}_qS \right)_{\alpha\beta}\chi^\alpha_{-p}\chi^\beta_{-q} \right)^k.
\end{split}
\end{equation}
In this sum, due to the rules of Berezinian  integration, the only terms that are non-vanishing are those of order $\ell:=\frac{\#(\lambda)}{2}$, where all the $\#(\lambda)$ fermions in the measure are saturated. Furthermore, each such term appears with the sign of the permutation $\pi$ that brings 
the fermionic insertions into the correct order. This means that
\begin{equation}
\begin{split}
\langle\vec{\lambda}|e^{-\frac{1}{2}\chi\cdot\sfd\cdot\chi}|0\rangle & =\frac{(-1)^{\ell}}{2^{\ell}\ell!}\sum_{\pi\text{ perm. of }\vec{\lambda}}\mathrm{sgn}(\pi)\prod_{\alpha,\beta=1}^N\left(\sfd^{-\pi(p_{\alpha,1})}_{\pi(q_{\beta,1})}S \right)_{\pi(\alpha)\pi(\beta)}\dots\left(\sfd^{-\pi\left(p_{\alpha,\ell}\right)}_{\pi\left(q_{\beta,\ell}\right)}S \right)_{\pi(\alpha)\pi(\beta)}\\
& =(-1)^\ell\Pf(\sfd S)_{\vec{\lambda}}.
\end{split}
\end{equation}
Essentially the same computation shows that 
\be
\langle0|e^{-\frac{1}{2}\chi\cdot S\sfa\cdot\chi}|\vec{\lambda}\rangle=(-1)^\ell\Pf(S\sfa)_{\vec{\lambda}}.
\ee
Combining the two, the terms of opposite sign cancel, and we have
\begin{equation}
\tau_O[J]=\sum_{\vec{\lambda}\in (SP)^N}\Pf(S\sfa)_{\vec{\lambda}}\Pf(\sfd S)_{\vec{\lambda}}.
\end{equation}
\end{proof}

We see that the Fredholm determinant Widom $\tau$-function $\tau_W[J]$ for orthogonal loop group elements 
is the square of a $\tau$-function $\tau_0[J]$, which admits an expansion in Pfaffian minors.
It is  natural to expect that $\tau_O[J]$ itself  is the Pfaffian of a $2$-form on the Hilbert space $\HH^N$. 
This is most easily seen by giving an operatorial interpretation of the above fermionic construction, as follows. 
We can conjugate the kernel inside the Fredholm determinant to get:
\begin{equation}
\begin{split}
\tau_W[J] & =\det_{\HH^N}\left( \begin{array}{cc}
\mathbb{I}_+ & \sfa \\
\sfd & \mathbb{I}_-
\end{array} \right)=\det_{\HH^N}\left( \begin{array}{cc}
\mathbb{S} & 0 \\
0 & \mathbb{I}_- 
\end{array} \right)\left( \begin{array}{cc}
\mathbb{I}_+ & \sfa \\
\sfd & \mathbb{I}_-
\end{array} \right)\left( \begin{array}{cc}
\mathbb{S} & 0 \\
0 & \mathbb{I}_- 
\end{array} \right)\\
&=\det_{\HH^N}\left( \begin{array}{cc}
\mathbb{I}_+ & S\sfa \\
\sfd S & \mathbb{I}_-
\end{array} \right)=\det_{\HH^N}\left( \begin{array}{cc}
\mathbb{I}_+ & S\sfa \\
\sfd S & \mathbb{I}_-
\end{array} \right),
\end{split}
\end{equation}
where $\mathbb{I}_\pm:=\mathbb{I}|_{\HH^N_\pm}$, and $\mathbb{S}$ acts as a matrix $S$ on the $\mathbb{C}^N$ component of $\HH^N=L^2(S^1)\otimes\mathbb{C}^N$ 
and as the identity operator on $L^2(S^1)$. To write $\tau_W[J]$ as the square of a Pfaffian, introduce the operator $\Omega:\HH^N_{\pm}\rightarrow \HH^N_{\mp}$ acting on the monomial basis as follows:
\begin{equation}
\Omega\left(z^{p-\frac{1}{2}} \right)= \begin{cases}
-z^{-p-\frac{1}{2}}, & p>0, \\
z^{-p-\frac{1}{2}}, & p<0.
\end{cases}
\end{equation}
Then
\begin{equation}
\tau_W[J]=\det_{\HH^N}\left(\mathbb{I}+ \left(\begin{array}{cc}
0 & S\sfa \\
\sfd S & 0
\end{array}\right) \right)
=\det_{\HH^N}\left(\mathbb{I}+ \left(\begin{array}{cc}
S\sfa\Omega_+ & 0 \\
0 & \sfd S\Omega_-
\end{array} \right)\Omega^{-1} \right),
\end{equation}
where $\Omega_\pm=\Omega|_{\HH^N_{\pm}}$. Using a block operator version of the Fredholm Pfaffian definition from \cite{rains2000correlation},  we have
\begin{equation}
\det_{\HH^N}\left(\mathbb{I}+ \left(\begin{array}{cc}
S\sfa\Omega_+ & 0 \\
0 & \sfd S\Omega_-
\end{array} \right)\Omega^{-1} \right)=\Pf\left(\Omega+\left(\begin{array}{cc}
S\sfa\Omega_+ & 0 \\
0 & \sfd S\Omega_-
\end{array} \right) \right)^2,
\end{equation}
which gives the identification
\begin{equation}
\label{eq:TauFredPf}
\tau_O[J]=\pm\Pf\left(\Omega+\left(\begin{array}{cc}
S\sfa\Omega_+ & 0 \\
0 & \sfd S\Omega_-
\end{array} \right) \right),
\end{equation}
the sign being chosen in such a way that the leading term in $\tau_O[J]$ is $+1$. In 
\eqref{eq:TauFredPf} we are, strictly speaking, making a slight abuse of notation: a Pfaffian is well-defined only for $2$-forms, while $\sfa,\sfd,\Omega$ were introduced as endomorphisms of $\HH^N$, i.e. $(1,1)$ tensors. We can however use the isomorphism \eqref{eq:DualPair} between $\HH^N$ and $\HH^{N*}$ to view an endomorphism with antisymmetric Fourier coefficients as a $2$-form. Under this identification, 
\be
\Omega \in \HH^N_+\wedge\HH^N_+ \oplus \HH^N_-\wedge\HH^N_-, \qquad
S\sfa\Omega_+  \in\HH^N_+\wedge\HH^N_+, \qquad \sfd S\Omega_-\in \HH^N_-\wedge \HH^N_-.
\ee 
This  also explains the geometric reason behind the appearance of the matrix $S$ multiplying our operators; to turn $\sfa$ and $\sfd$ into $2$-forms, we have to ``lower an index''  with the quadratic form \eqref{eq:PairingO}, while the factor $\Omega_\pm$ assures that the expression for $\tau_O[J]$, thought of as an operator, is still consistent with the $\HH^N_\pm$ splitting\footnote{$\tau_O[J]$ can also be identified with the relative Pfaffian of the operators $S\sfa$ and $\sfd S$, which was introduced in \cite{jaffe1989Pfaffians}. The definition of the relative Pfaffian is essentially the expansion \eqref{eq:CombExpD}.}.

Figure \ref{Fig:MinorPf} illustrates the $S\sfa$-Pfaffian minor corresponding to the case of the pairs of strict
partitions $\vec{\lambda}_1=\left((1,0),\emptyset \right)$ and $\vec{\lambda}_2=\left((0),(1) \right)$.

\begin{figure}[h]
\begin{center}
\begin{subfigure}[t]{.45\textwidth}
\centering
\includegraphics[width=\textwidth]{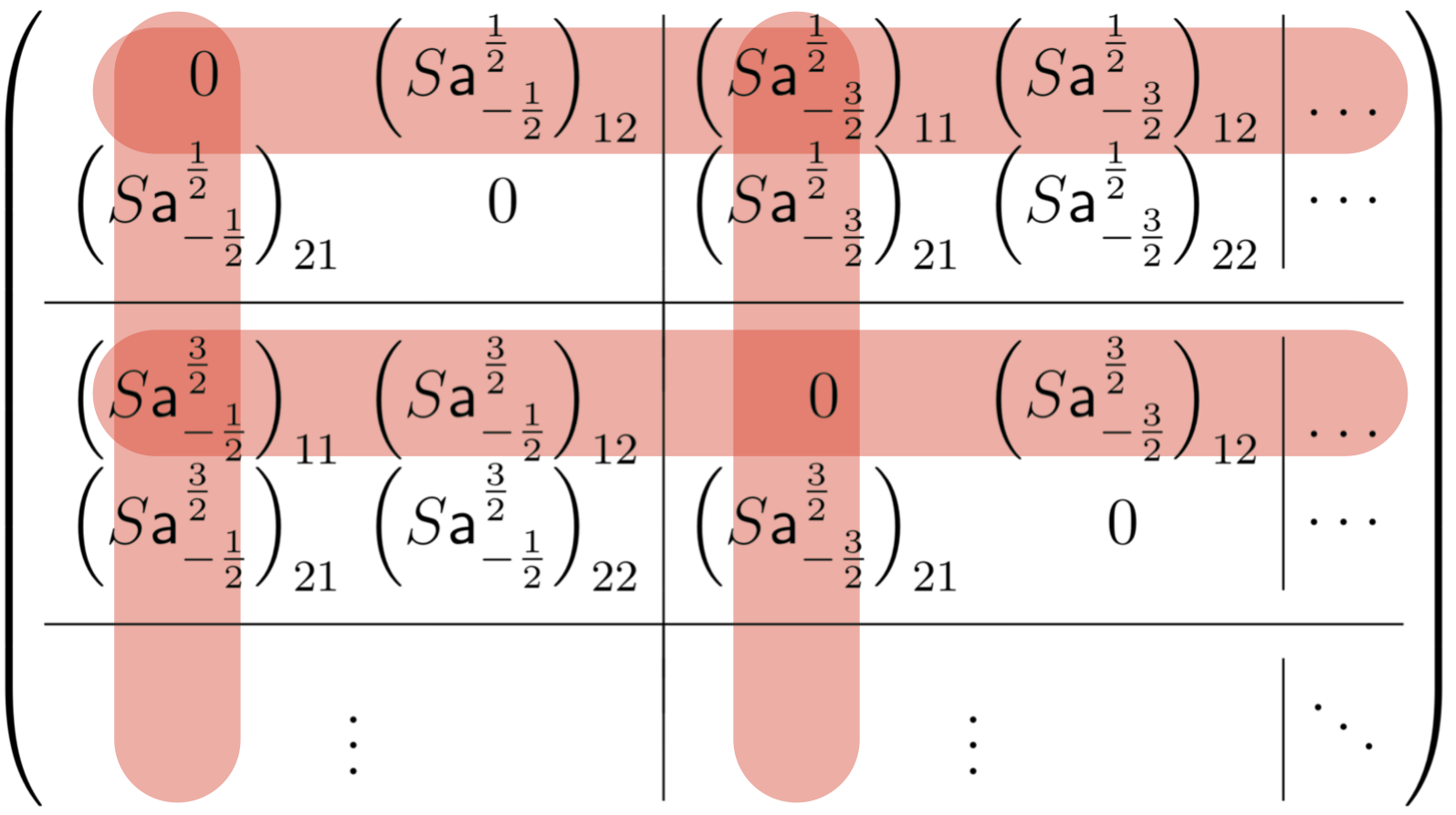}
\caption{{\footnotesize  Pfaffian minor of the antisymmetric $2\times 2$ block infinite matrix $S\sfa$ associated to the pair of strict partitions
 $\vec{\lambda}_1=\left((1,0),\emptyset \right)$.
\\$\Pf\left(S\sfa\right)_{\vec{\lambda}_1}=\Pf\left( \begin{array}{cc}
0 & \left(S\sfa^{\frac{1}{2}}_{-\frac{3}{2}} \right)_{11} \\
\left(S\sfa^{\frac{3}{2}}_{-\frac{1}{2}} \right)_{11} & 0
\end{array} \right) $.}}
\end{subfigure}\hfill
\begin{subfigure}[t]{.45\textwidth}
\centering
\includegraphics[width=\textwidth]{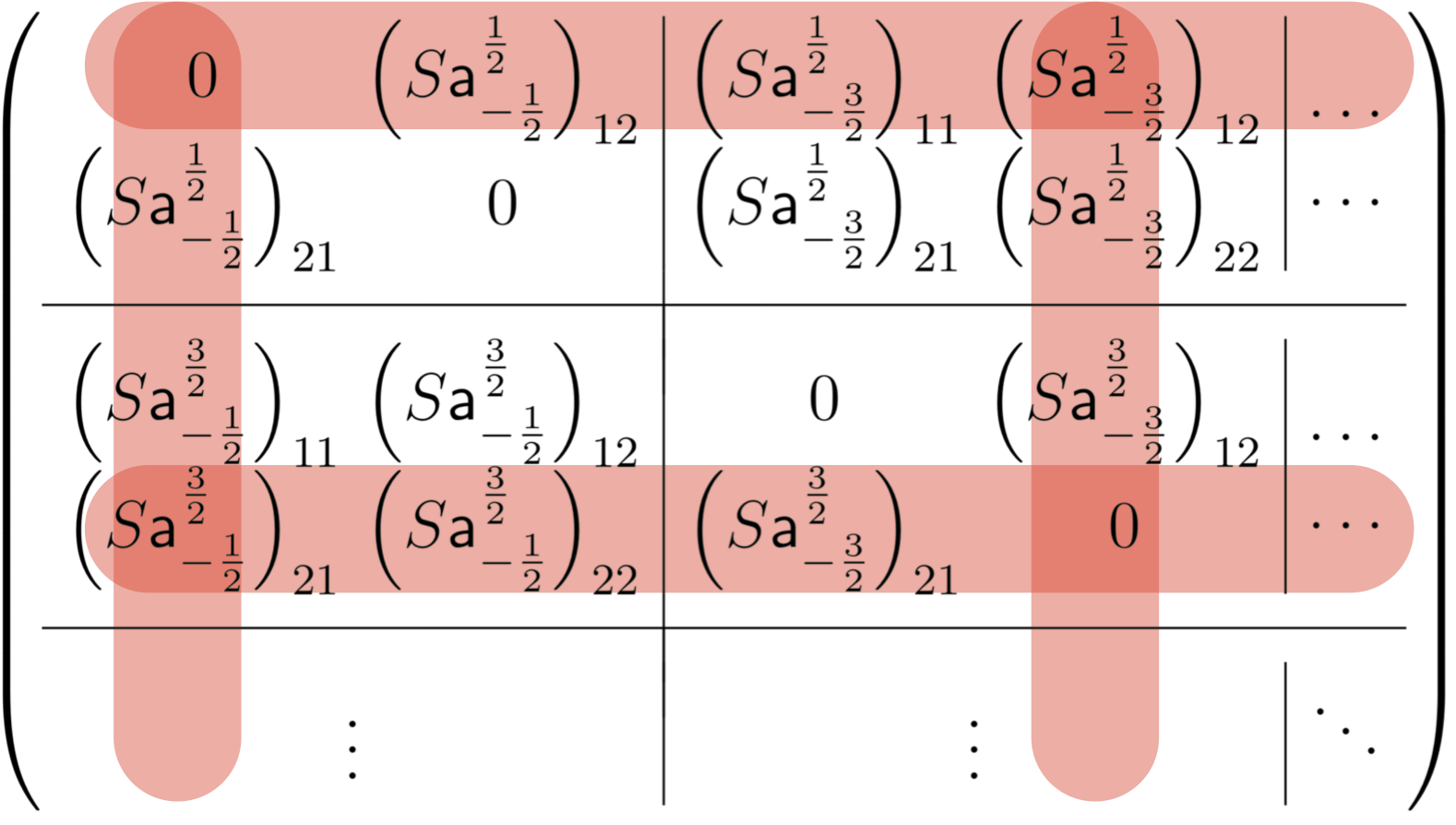}
\caption{{\footnotesize Pfaffian minor of the antisymmetric $2\times 2$ block infinite matrix $S\sfa$, associated to the pair of strict partitions
 $\vec{\lambda}_2=\left((0),(1) \right)$. \\
\\$\Pf\left(S\sfa\right)_{\vec{\lambda}_2}=\Pf\left( \begin{array}{cc}
0 & \left(S\sfa^{\frac{1}{2}}_{-\frac{3}{2}} \right)_{12} \\
\left(S\sfa^{\frac{3}{2}}_{-\frac{1}{2}} \right)_{21} & 0
\end{array} \right) $.}}
\end{subfigure}
\caption{$2\times 2$ Pfaffian minors of the infinite antisymmetric block matrix $S\sfa$. The strict partitions label the rows and columns.}\label{Fig:MinorPf}
\end{center}
\end{figure}
\bigskip


\section{Pfaffian $\tau$-function and the orthogonal Drinfeld-Sokolov hierarchy}
\label{drinfeld_sokolov_pfaff}
We turn to a first application of the results of Section \ref{sec:OrtFerm} and  our general formalism. This will allow us to write the $\tau$-function for the 
Drinfeld-Sokolov hierarchy \cite{drinfeld_sokolov1984} as a Pfaffian, as well as providing an expansion in terms of Pfaffian minors labeled by $N$-tuples of strict partitions, where $N=2\ell+1$ for the $B_\ell^{(1)}$ case and $N=2\ell$ for the $D_\ell^{(1)}$ case.

First recall some standard notation for affine Kac-Moody algebras. Let $\ell$ be the rank of the Lie algebra $\mathfrak{g}=\mathfrak{so}_N$, where either
 $N=2\ell+1$, for $B_\ell$ or $N=2\ell$ for $D_\ell$,  let $\mathfrak{h}\subset\mathfrak{so}_N$ be a Cartan subalgebra, and $\Pi:=\{\alpha_1,\dots,\alpha_\ell\}$ a set of simple roots. If $\Delta\subset\mathfrak{h}^*$ is the root system, we have the decomposition 
\begin{equation}
\mathfrak{g}=\mathfrak{h}\bigoplus_{\alpha\in\Delta}\mathfrak{g}_\alpha.
\end{equation}
For $\alpha\in\Delta$, let $h_{\alpha}\in\mathfrak{h}$ be defined by
\begin{equation}
B\left(h_\alpha ,X \right)=\alpha(X),\quad \forall \ X\in\mathfrak{h},
\end{equation}
where $B$ is the Killing form, and
\begin{equation}
H_\alpha:=\frac{2h_\alpha}{B(h_\alpha,h_\alpha)}.
\end{equation}
Define the $\ad^*$ and Weyl group invariant bilinear form $( \, \cdot  \, | \,  \cdot  \,)$ on  $\grh^*$ by 
\begin{equation}
\left(\alpha|\beta \right):= \kappa\, B\left(H_\alpha,H_\beta \right), \quad\forall \ \alpha,\beta\in\Delta,
\end{equation} 
where $\kappa$ is determined by requiring
\be
\left(\theta|\theta \right)=2
\ee
for the highest root  $\theta$ with respect to $\Pi$.
Let
\be
E_i \in\mathfrak{g}_{\alpha_i}, \quad F_i\in\mathfrak{g_{-\alpha_i}}, \quad H_i  :=\frac{2H_{\alpha_i}}{\left(\alpha_i|\alpha_j \right)},\\
\ee
be a set of Weyl generators of $\mathfrak{g}$, satisfying
\be
 [H_i,E_j] =A_{ij}E_j, \quad [H_i,F_j]=-A_{ij}F_j, \quad
 [E_i,F_j] =H_i\delta_{ij}, \quad  i, j =1,\dots,\ell , 
\ee
where $A_{ij}$ is the Cartan matrix. Define the Chevalley involution $\SS$ by
\begin{align}
\SS(E_i)=-F_i, \quad \SS(F_i)=-E_i, \quad \SS(H_i)=-H_i, \quad i=1,\dots,\ell,
\end{align}
extended as an involutive automorphism of the algebra. Fix $E_{-\theta}\in\mathfrak{g}_{-\theta}$, $E_{\theta}\in\mathfrak{g}_\theta $ by the conditions
\be
\left(E_\theta|E_{-\theta} \right)=1, \quad \SS (E_{-\theta})=-E_\theta,
\ee
and define
\be
E_0:=E_{-\theta}, \quad F_0:=E_{\theta}, \quad H_0:=[E_0,F_0].
\ee
A set of Weyl generators for the affine Kac-Moody algebra $\grg^{(1)}$, obained as a central extension of the loop algebra $L\grg$
 is\footnote{More generally, one could set  $h_i:= H_i+\delta_{i,0}c$, where $c$ is the central extension element.
But since the central extension will play no part, we omit it, and just deal with the loop algbra $L\grg$.}
\be
e_i:=z^{\delta_{i,0}}E_i, \quad f_i:=z^{-\delta_{i,0}}F_i, \quad h_i:= H_i.
\ee

We use the matrix realization of $L_S\grso(N)$ introduced in \cite{drinfeld_sokolov1984, kac1990infinite, Cafasso2019}, 
which is recalled in Appendix \ref{sec:MatRep}. In this representation, elements of the Lie algebra  satisfy 
\begin{equation}\label{eq:ChevAntis}
X^t=-SXS=-\mathcal{S}(X),
\end{equation}
where $S$ is the matrix representation of the Chevalley involution, acting by conjugation, defined in  eqs.~\eqref{eq:ChevalleyB} and \eqref{eq:ChevalleyD}.) 
We introduce the shift matrix
\be
\Lambda:= \sum_{i=1}^nE_i+zE_{-\theta}\in L_+\mathfrak{g},
\ee
satisfying
\be
\Lambda^{1+h}=z\Lambda,
\ee
where 
\be
h= \begin{cases}
2\ell=N-1, & \ \text{for } \ B_\ell^{(1)}, \\
2\ell-2=N-2, & \ \text{for } \ D_\ell^{(1)}
\end{cases}
\ee
is the Coxeter number. The time evolution of the corresponding Drinfeld-Sokolov hierarchy is encoded in $\Psi_+(z)\in L_{S+} SO(N)$, defined by:
 \footnote{In  general, the $D_\ell^{(1)}$ case admits two sets of abelian flows, labeled by times $t_{2k+1}$ and $t'_{2k+1}$, as in  \cite{Cafasso2019}. For the sake of clarity of exposition, we restrict  ourselves to the first only, and set $t'_{2k+1}=0$. }
\be
\label{eq:PsipB}
\Psi_+(z):=e^{Y(z, \tb_0)}, \quad Y(z, \tb_0):= \left\{\sum_{k=1}^{\infty}t_{2k+1}\Lambda^{2k+1}\right\}.
\ee
where
\be
\tb_0:= (t_1, t_3, t_5, \dots).
\ee
The initial condition  is encoded in an arbitrary negative element $\Psi_-\in L_{S-}SO(N)$, which we write as 
\be
\label{eq:PsimDS}
\Psi_-(z):= e^{-X(z)}, \quad X(z)\in L_{S-}\grso(N).
\ee

In \cite{Cafasso2018} it was shown that the Widom constant \eqref{eq:WidomTauDef} is the square of the $\tau$-function of the Drinfeld-Sokolov
 hierarchy with time evolution defined by \eqref{eq:PsipB} and initial condition specified by \eqref{eq:PsimDS}, namely:
\begin{equation}\label{eq:WDS}
\tau_W[J]=\tau_{DS}^2.
\end{equation}
Equation \eqref{eq:WDS} is already a strong indication that $\tau_{DS}$ should be a Pfaffian, since  it is the square root of the Widom constant $\tau$-function 
$\tau_W[J]$. In fact, this is a special instance of Theorem \eqref{thm:RealFTau}, following from \eqref{eq:ChevAntis}, 
which means that both $\Psi_+,\Psi_-$ satisfy our orthogonal loop group condition \eqref{eq:ORHP}:
\begin{proposition}\label{thm:DSTau}
The $\tau$-function of the Drinfeld-Sokolov hierarchy of type $D_\ell^{(1)}$ or $B_\ell^{(1)}$ is the Pfaffian expression \eqref{eq:TauD}, where $S$ is the Chevalley involution of the appropriate loop algebra, where $\Psi_+(z)$ entering in $\sfa(z,w)$ is given by \eqref{eq:PsipB}, and
$\Psi_-(z)$ by (\ref{eq:PsimDS})  where $X_-$ is an arbitrary element of $L_S \grso(N)$. 
\end{proposition}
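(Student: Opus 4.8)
The plan is to read this off from Theorem~\ref{thm:RealFTau} together with the identity~\eqref{eq:WDS} of \cite{Cafasso2018}; the only point that genuinely needs checking is that the Drinfeld--Sokolov data $(\Psi_+,\Psi_-)$ satisfy the orthogonal reduction hypothesis~\eqref{eq:ORHP}, where $S$ is now the matrix realization of the Chevalley involution (eqs.~\eqref{eq:ChevalleyB}, \eqref{eq:ChevalleyD}).

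First I would verify the reduction condition on each factor. The shift matrix $\Lambda=\sum_{i=1}^{\ell}E_i+zE_{-\theta}$ is a sum of loop-algebra Chevalley generators, hence lies in $L_{S+}\grso(N)$ and by~\eqref{eq:ChevAntis} satisfies $\Lambda^t=-S\Lambda S$. Since $S^2=\mathbb{I}_N$, every odd power obeys $(\Lambda^{2k+1})^t=(\Lambda^t)^{2k+1}=(-1)^{2k+1}S\Lambda^{2k+1}S=-S\Lambda^{2k+1}S$, so $Y(z,\tb_0)=\sum_{k\ge 1}t_{2k+1}\Lambda^{2k+1}$ again satisfies $Y^t=-SYS$; exponentiating, $\Psi_+^tS\Psi_+=e^{-SYS}Se^{Y}=Se^{-Y}e^{Y}=S$, i.e.\ $\Psi_+\in L_{S+}SO(N)$. (This is precisely why only odd times occur, and why the second family of $D_\ell^{(1)}$ flows is set to zero in the footnote to~\eqref{eq:PsipB}: the even powers $\Lambda^{2k}$ would give the \emph{symmetric} combination $+S\Lambda^{2k}S$ and so leave the orthogonal loop group.) For the negative factor, the hypothesis $X(z)\in L_{S-}\grso(N)$ gives $X^t=-SXS$ directly, so $\Psi_-=e^{-X}$ satisfies $\Psi_-^tS\Psi_-=S$. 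Hence the jump $J=\Psi_-^{-1}\Psi_+$ obeys~\eqref{eq:OSJump}, equivalently~\eqref{eq:ORHP}.

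With the hypothesis in hand, Theorem~\ref{thm:RealFTau} applies verbatim, the kernels $\sfa,\sfd$ of~\eqref{eq:adKers} being built from the present $\Psi_\pm$: it yields $\tau_W[J]=\tau_O[J]^2$ with $\tau_O[J]=\langle\sfa_O|\sfd_O\rangle$ as in~\eqref{eq:TauD}, and then~\eqref{eq:WDS} gives $\tau_{DS}^2=\tau_W[J]=\tau_O[J]^2$, so $\tau_{DS}=\pm\tau_O[J]$. The one remaining point — and the only place requiring any care — is to fix this sign. I would do so by the usual normalization argument: $\tau_{DS}$ (via~\eqref{eq:WDS} and the determinantal expansion~\eqref{eq:WidomMinor}, whose empty-diagram term equals $1$) and $\tau_O[J]$ (via the Pfaffian expansion~\eqref{eq:CombExpD}, whose $\vec{\lambda}=\emptyset$ term equals $1$) are both expansions with leading term $+1$; a series with leading term $1$ has a unique square root with leading term $1$, so the two series coincide and $\tau_{DS}=\tau_O[J]$. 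In particular~\eqref{eq:CombExpD} furnishes the claimed expansion of $\tau_{DS}$ in Pfaffian minors indexed by $N$-tuples of strict partitions. Everything beyond this sign bookkeeping is a direct application of results already established, so I do not expect a substantive obstacle.
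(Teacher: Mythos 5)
Your proposal is correct and follows essentially the same route as the paper, which presents the proposition as "a special instance of Theorem \ref{thm:RealFTau}, following from \eqref{eq:ChevAntis}" combined with the identity \eqref{eq:WDS} of \cite{Cafasso2018}; your write-up merely fills in the (routine) verification that $\Lambda^{2k+1}$ and $X$ inherit the antisymmetry $M^t=-SMS$ so that $\Psi_\pm$ satisfy \eqref{eq:ORHP}, plus the sign normalization, neither of which is spelled out in the paper.
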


It follows from Proposition \eqref{thm:DSTau} that we have combinatorial expansions of the $\tau$-functions as in \eqref{eq:CombExpD},  
which we write here for convenience:
\begin{equation}
\tau_{DS}=\sum_{\vec{\lambda}\in SP}\Pf(\sfa_{\vec{\lambda}})\Pf(\sfd_{\vec{\lambda}}).
\end{equation}
The term  term $\Pf(\sfd_{\vec{\lambda}})$ in this context gives the Cartan coordinate\footnote{See \cite{HB}, Appendix E, or \cite{BHH, HO1, HO2} for the definition of {\em Cartan coordinates} in the context of the BKP hierarchy. In our context, by  ``Cartan coordinate'', we simply mean the Pfaffian expression $\Pf(\sfd_{\vec{\lambda}})$, specifying the initial condition in the isotropic Grassmannian.} of the point in the (isotropic) Grassmannian for $L_S \SO(N)$ corresponding to the given initial condition and to the $N$-tuple of strict partitions $\vec{\lambda}$, while the term $\Pf(\sfa_{\vec{\lambda}})$ specifies the time dependence.

\begin{remark}
Previous expressions \cite{Cafasso2015, Cafasso2017, Cafasso2019} for Drinfeld-Sokolov $\tau$-functions were obtained by expanding the Widom $\tau$-function in minor determinants labeled by all types of partitions and then taking a square root. The expansion in terms of Pfaffians is more intrinsic, since everything is formulated directly on the isotropic Grassmannian defined by the fermions \eqref{ofModes}. Furthermore, it is more efficient, since the set of strict partitions is only a small subset of the full set of partitions.
\end{remark}

\section{Examples of polynomial Drinfeld-Sokolov \hbox{$\tau$-functions}}
\label{drinfeld_sokolov_tau}

The simplest type of Drinfeld-Sokolov $\tau$-functions are polynomials, for which the initial condition matrix is 
$\Psi_-(z)=e^X(z)$ where $X(z)$ is nilpotent, so the expression for $\Psi_-$ truncates 
at finite order and $\sfa,\sfd$ are effectively finite block matrices. We compute here some examples of such polynomial 
$\tau$-functions\footnote{See \cite{KvdL2018, KvdL2019, KvdLRoz2021, VdL2021} for a thorough account of polynomial 
$\tau$-functions of KP, BKP, DKP, mKP and multicomponent KP type.}

\subsection{$B_1^{(1)}$ polynomial $\tau$-function}\label{sec:B1pol}

A first example of a polynomial Drinfeld-Sokolov $\tau$-function  in the case $B_1^{(1)}= L_S \SO(3)$, is obtained
by choosing the following upper or lower triangular initial condition data:
\be
X_1(z)=\frac{1}{z}\left( \begin{array}{ccc}
0 & 0 & 0 \\
a & 0 & 0 \\
0 & a & 0
\end{array} \right), \quad X_2(z)=\frac{1}{z}\left( \begin{array}{ccc}
0 & b & 0 \\
0 & 0 & b \\
0 & 0 & 0
\end{array} \right).
\ee
We restrict to the first three times $t_1,t_3,t_5$, and take $\Psi_+(z)$ in $\eqref{eq:PsipB}$ to be $e^{Y(z, \tb_o)}$, where
\begin{equation}
\begin{split}
Y(z,\tb_o) & :=t_1\Lambda+t_3\Lambda^3+t_5\Lambda^5\\
&=\left(
\begin{array}{ccc}
 0 & \frac{1}{2} z \left(z \left[zt_5 +t_3\right)+t_1\right] & 0 \\
 z \left(zt_5 +t_3\right)+t_1 & 0 & \frac{1}{2} z \left[z \left(zt_5 +t_3\right)+t_1\right] \\
 0 & z \left(zt_5 +t_3\right)+t_1 & 0 ,\\
\end{array}
\right)
\end{split},
\end{equation}
so the factorized group element is
\be
J_a(z) := e^{-X_a(z)}e^{Y(z,\tb_0)}, \quad a =1, 2.
\ee
The Cartan coordinates (the coefficients in the $Q$-Schur function expansion) and time polynomials for the initial condition 
matrix $\Psi_+(z)=e^{X_1(z)}$ are then
\begin{equation}\label{eq:TableT31}
\smallskip
\begin{array}{c|c|c}
\Pf(\sfd_{\vec{\lambda}}) & \Pf(\sfa_{\vec{\lambda}}) & \vec{\lambda} \\
\hline
-a & \frac{t_1}{2}=\frac{1}{2} Q_{(1,0)}(\tb_0) & \left(
\begin{array}{ccc}
 \emptyset , & (0) , & (0)  \\
\end{array}
\right) \\
-\frac{a^2}{2} & -\frac{t_1^2}{8}=-\frac{1}{4} Q_{(2,0)}(\tb_0) & \left(
\begin{array}{ccc}
 \emptyset ,& \emptyset ,& (1,0) \\
\end{array}
\right)
\end{array}
\smallskip
\end{equation}
and the resulting $\tau$-function is
\begin{equation}
\tau_{O}[J_1]=\left(1 - a \frac{t_1}{4}\right)^2.
\end{equation}
The Cartan coordinates and time polynomials for the initial condition matrix $e^{X_2(z)}$ are
\be
\smallskip
\begin{array}{c|c|c}
\Pf(\sfd_{\vec{\lambda}}) & \Pf(\sfa_{\vec{\lambda}}) & \vec{\lambda} \\
\hline
 b & \frac{t_1^3}{12}-t_3=\frac{1}{2}Q_{(2,1)}(\tb_0) & \left(
\begin{array}{ccc}
 (0) , & (0) , & \emptyset \\
\end{array}
\right)\\
-\frac{b^2}{2} & -\frac{1}{288}\left(t_1^3-12 t_3\right)^2=-\frac{1}{2}Q_{(4,2)}(\tb_0) & \left(
\begin{array}{ccc}
 (1,0) ,& \emptyset , & \emptyset \\
\end{array}
\right) 
\end{array},
\smallskip
\ee
and the corresponding $\tau$-function is
\be
\tau_{O}[J_2]=\left(\frac{a}{24} \left(t_1^3-12 t_3\right)+1\right)^2.
\ee

\begin{remark}
Here the $Q_\lambda$'s are $Q$-Schur functions, labelled by {\em strict} partitions \cite{Mac, HB}. In these examples, and others
to  follow, the strict partition $\lambda=\{\lambda_j\}$ labelling the $Q$-Schur function is obtained from the $N$-tuple of strict partitions $\vec{\lambda}=\left(\lambda^{(1)},\dots,\lambda^{(N)} \right)$ through the following (empirical) rule:
\begin{equation}
\{\lambda_i\}=\{N-\alpha +\lambda_j^{(\alpha)}(N-1)\}_{j=1, \dots n_\alpha \atop \alpha=1, \dots, N},
\end{equation}
properly ordered. This is \textit{not} an isomorphism, however, since the same combined strict partition can be obtained from more than one $N$-tuple of strict partitions. 
Moreover, there are, in principle, $N$-tuples of strict partitions that could lead to a combined partition that is non-strict. It seems, however, that those
Pfaffian minors in which this occurs always vanish.
\end{remark}

\begin{remark}\label{SO3SL2}
Comparing with \cite{Cafasso2018}, eq.~(4.1),  the Pfaffian $\tau$-functions for the $SO(3)$ case are themselves 
 seen to be squares of the corresponding $A_1^{(1)}$ Drinfeld-Sokolov polynomial $\tau$-functions with rescaled times. 
 An explanation of this is given in Theorem \ref{thm:SL2toSO3}.
 \end{remark}

\subsection{$B_2^{(1)}$ polynomial $\tau$-function}

We now give a much less trivial example of polynomial $\tau$-function. Specify the initial condition to be $\Psi_-(z)=e^{X_3(z)}$, where
\begin{equation}
X_3(z):=\left(
\begin{array}{ccccc}
 0 & 0 & 0 & 0 & 0 \\
 \frac{a_2}{z} & 0 & 0 & 0 & 0 \\
 \frac{a_3}{z} & \frac{a_5}{z} & 0 & 0 & 0 \\
 \frac{a_4}{z} & 0 & \frac{a_5}{z} & 0 & 0 \\
 0 & \frac{a_4}{z} & -\frac{a_3}{z} & \frac{a_2}{z} & 0 \\
\end{array}
\right),
\end{equation}
and the time evolution factor to be  $\Psi_+(z)=e^{Y_3(z,\tb_o)}$, where
\bea
Y_3(z,\tb_o)&\&:=t_1\Lambda+t_3\Lambda^3+t_5\Lambda^5\cr
&\& \cr
&\&=\left(
\begin{array}{ccccc}
 0 & -\frac{1}{2} zt_3  & 0 & -\frac{1}{2}  \left(z^2t_5+ zt_1\right) & 0 \\
 -(zt_5+t_1) & 0 & -zt_3 & 0 & -\frac{1}{2}  \left(z^2t_5 + z t_1\right) \\
 0 & -(zt_5+t_1) & 0 & -zt_3 & 0 \\
 -t_3 & 0 & -(zt_5+t_1) & 0 & -\frac{1}{2} zt_3 \\
 0 & -t_3 & 0 & -(zt_5+t_1) & 0 \\
\end{array},
\right)\cr
&\&
\eea
so
\be
J_3(z) := e^{-X_3(z)}e^{Y_3(z, \tb_0)}.
\ee
In this case, there are many more nonzero Cartan coordinates. For readability, we tabulate them, and the corresponding time dependant polynomial, 
only up to weights $|\lambda^{\alpha)}|=4$,
in Appendix \ref{B_2_1}

The expressions for coloured strict partitions of higher order become very long, so we do not write them down explicitly. Remarkably, all the time polynomials are again multiples of $Q$-Schur polynomials, which are tabulated, together with their coefficients, in Appendices \ref{B_2_1_a_coeffs} and \ref{B_2_1_d_coeffs} 
 for 5-tuples  $\vec{\lambda}$ of strict partitions of lengths $\ell(\lambda^{\alpha})\le 4$. The $\tau$-function is a polynomial of degree 30,  
 which is too long to write in detail for the general case. However, setting e.g. $a_2=0$ considerably simplifies  the expression, which becomes
\begin{equation}
\begin{split}
\tau_0[J_3] & =1+\frac{a_4 t_1}{2}+\frac{1}{4} a_3 t_1^2-\frac{1}{12} a_5 t_1^3-\frac{1}{192} a_3^2 t_1^4+\frac{1}{192} a_3 a_5 t_1^5-\frac{1}{720} a_5^2 t_1^6\\
&+a_5 t_3-\frac{1}{8} a_3^2 t_3 t_1+\frac{1}{8} a_3 a_5 t_3 t_1^2-\frac{1}{24} a_5^2 t_3 t_1^3+\frac{1}{4} a_5^2 t_3^2+\frac{1}{4} a_5^2 t_5 t_1.
\end{split}
\end{equation}
\subsection{$D_4^{(1)}$ polynomial $\tau$-function}
We conclude our list of examples with a simple polynomial $\tau$-function for the $D_\ell^{(1)}$ series, choosing the case of $SO(8)$. Since the size 
of the matrices starts to be quite large, we consider here the simplest lower triangular initial condition, and as before we consider time evolution with respect to $t_1,t_3,t_5$:\, choosing
\begin{align}
X_4(z)=\frac{1}{z}\left( \begin{array}{cccccccc}
0 & 0 & 0 & 0 & 0 & 0 & 0 & 0 \\
0 & 0 & 0 & 0 & 0 & 0 & 0 & 0 \\
0 & 0 & 0 & 0 & 0 & 0 & 0 & 0 \\
0 & 0 & 0 & 0 & 0 & 0 & 0 & 0 \\
0 & 0 & 0 & 0 & 0 & 0 & 0 & 0 \\
0 & 0 & 0 & 0 & 0 & 0 & 0 & 0 \\
a & 0 & 0 & 0 & 0 & 0 & 0 & 0 \\
0 & a & 0 & 0 & 0 & 0 & 0 & 0 
\end{array} \right), && Y_4(z, \tb_o)= z\left(
\begin{array}{cccccccc}
 0 & \frac{t_5}{2} & 0 & \frac{t_3}{4} & \frac{t_3 }{2} & 0 & \frac{t_1}{2} & 0 \\
 t_1 & 0 & t_5  & 0 & 0 & t_3  & 0 & \frac{t_1 }{2} \\
 0 & t_1 & 0 & \frac{t_5 }{2} & t_5  & 0 & t_3  & 0 \\
 t_3 & 0 & t_1 & 0 & 0 & t_5  & 0 & \frac{t_3 }{2} \\
 \frac{t_3}{2} & 0 & \frac{t_1}{2} & 0 & 0 & \frac{t_5 }{2} & 0 & \frac{t_3 }{4} \\
 0 & t_3 & 0 & \frac{t_1}{2} & t_1 & 0 & t_5  & 0 \\
 t_5 & 0 & t_3 & 0 & 0 & t_1 & 0 & \frac{t_5 }{2} \\
 0 & t_5 & 0 & \frac{t_3}{2} & t_3 & 0 & t_1 & 0 \\
\end{array}
\right).
\end{align}
with
\be
J_4(z) :=e^{-X_4(z)}e^{Y_4(z,\tb_0)}
\ee

The only nontrivial Cartan coordinate is
\begin{equation}
\Pf\left(\sfd_{\left((0),(0),(0),(0),(0),(0),(1),(1) \right)}\right)=-a,
\end{equation}
so we only need
\begin{equation}
\Pf\left(\sfa_{\left((0),(0),(0),(0),(0),(0),(1),(1) \right)}\right)=\frac{t_1}{2}=\frac{1}{2}Q_{(1,0)}(\tb_o),
\end{equation}
giving
\begin{equation}
\tau_O[J_4]=1-a\frac{t_1}{2}.
\end{equation}


\section{$SO(N)$ linear systems and their isomonodromic deformations}
\label{SO_N_isomonodromy}
Another important class of problems for which the $\tau$-function has been identified with a Widom constant \eqref{eq:WidomTauDef} 
are the isomonodromic deformations of $SL(N)$ linear systems of ODEs on the Riemann sphere with punctures \cite{Cafasso2017}.
In the following, we show how the RH problem may be defined, in the case of four simple poles, and how the general solution
for the $\SO(3)$ case may be deduced from the corresponding results derived in \cite{Cafasso2017} for the $SL(2)$ case, corresponding
to Painlevé's sixth transcendant $P_{VI}$.

\subsection{4-point $\mathfrak{g}$-valued linear system and the Widom $\tau$-function}

We illustrate first how to generalize the known construction for $SL(N)$ to the case of an arbitrary matrix representation $\rho$ of a Lie algebra $\mathfrak{g}$ corresponding to a semisimple Lie group $G$, by considering the explicit example of a Fuchsian system with four singular points $0,t,1,\infty$:
\be
\label{eq:4ptsyst}
\partial_z\Phi(z)=\Phi(z)A(z) , \quad A(z)=\frac{A_0}{z}+\frac{A_t}{z-t}+\frac{A_1}{z-1},
\ee
where $A_0,A_t,A_1$ are $N\times N$ matrices in the representation $\rho$. The case of a more general linear systems on the sphere with rational coefficients can be studied by similar means following the construction of \cite{Cafasso2017}, Section 3.

In the generic case, the matrices $A_0,A_t,A_1$ can be written as
\begin{equation}
A_k=G_k\Theta_kG_k^{-1},\quad k=0,t,1,
\end{equation}
where $G_k\in G$ and $\Theta_k\in\mathfrak{h}$, the Cartan subalgebra of $\mathfrak{g}$. The local solution of this linear system around the singular points will be
\begin{equation}
\Phi^{(k)}(z)=
    \begin{cases}
    (z_k-z)^{\Theta_k}G^{(k)}(z), & k=0,1,t, \\
    (-z)^{-\Theta_\infty}G^{(\infty)}(z),
    \end{cases}
\end{equation}
where $G^{(k)}(z)\in LG$ are holomorphic matrix functions in a neighborhood of the singular point, such that $G^{(k)}(z_k)=G_k$. Around the singular points of the equation, $\Phi(z)$ will have monodromies
\begin{align}
M_k=G_k e^{2\pi i\Theta_k}G_k^{-1},\quad k=0,1,t, && M_0M_tM_1M_{\infty}=\mathbb{I}_n.
\end{align}
Now define $\mathfrak{S}$ through
\begin{equation}\label{eq:SigmaDef}
e^{2\pi i\mathfrak{S}}:=M_0M_t,
\end{equation}
which generically can be assumed to lie in the Cartan subalgebra of the Lie algebra (if needed, by applying a constant gauge transformation to the linear system \eqref{eq:4ptsyst}). $\mathfrak{S}$ in \eqref{eq:SigmaDef} is defined only up to shifts in the root lattice $\mathcal{Q}(\mathfrak{g})$. Let $D_a$, $a=0,1,t$ be small discs surrounding the points $0,1,t$, and assume for convenience that $0<|t|<1$. Consider the contour $\Gamma$ shown in Figure \ref{Fig:4ptRHP}. It divides the Riemann sphere with the points $0,t,1,\infty$ removed into two regions, which we call $\T_\pm$, corresponding geometrically to a decomposition of the four-punctured sphere into two pairs of pants which we identify with the regions $\T_\pm$, as in Figure \ref{fig:4ptPants}. The circle $\cC$ of radius $R$, centered at the origin, along which the pants are glued is identified as the red circle of Figure \ref{fig:4point}, with $|t|<R<1$.
\begin{figure}[h!]
\begin{center}
\begin{subfigure}{.4\textwidth}
\centering
\begin{tikzpicture}[scale=1.5]
\draw[thick,decoration={markings, mark=at position 0.27 with {\arrow{>}}},postaction={decorate}](0,0) circle[x radius=0.2, y radius=0.2];
\draw[thick,decoration={markings, mark=at position 0.27 with {\arrow{>}}},postaction={decorate}](0.6,0) circle[x radius=0.2, y radius=0.2];
\draw[thick,decoration={markings, mark=at position 0.25 with {\arrow{<}}},postaction={decorate}](0,0) circle[x radius=1, y radius=1];
\draw[thick,,decoration={markings, mark=at position 0.25 with {\arrow{>}}},postaction={decorate},color=red](0,0) circle[x radius=1.25, y radius=1.25];
\draw[thick,decoration={markings, mark=at position 0.25 with {\arrow{>}}},postaction={decorate}](0,0) circle[x radius=1.5, y radius=1.5];
\draw[thick,decoration={markings, mark=at position 0.27 with {\arrow{>}}},postaction={decorate}](2,0) circle[x radius=0.2, y radius=0.2];
\draw[thick,decoration={markings, mark=at position 0.25 with {\arrow{<}}},postaction={decorate}](0,0) circle[x radius=2.5, y radius=2.5];

\draw[thick,decoration={markings, mark=at position 0.75 with {\arrow{>}}},postaction={decorate}] (0.2,0) to (0.4,0);
\draw[thick,decoration={markings, mark=at position 0.75 with {\arrow{>}}},postaction={decorate}] (0.8,0) to (1,0);
\draw[thick,decoration={markings, mark=at position 0.75 with {\arrow{>}}},postaction={decorate}] (1.5,0) to (1.8,0);
\draw[thick,decoration={markings, mark=at position 0.75 with {\arrow{>}}},postaction={decorate}] (2.2,0) to (2.5,0);

\node at (0,0) {\tiny$\times$};
\node at (0,-0.4) {$D_0$};
\node at (0.6,0) {\tiny$\times$};
\node at (0.6,-0.4) {$D_t$};
\node at (2,0) {\tiny$\times$};
\node at (2,-0.4) {$D_1$};
\node at (1.37,0) {$\red\cC$};
\node at (-0.5,0) {\Large $\mathcal{T}_-$};
\node at (-2,0) {\Large $\mathcal{T}_+$};

\end{tikzpicture}
\caption{The black solid lines constitute the RHP contour $\Gamma$, dividing the complex plane in two regions $\mathcal{T}_\pm$.}\label{Fig:4ptRHP}
\end{subfigure}
\hfill
\begin{subfigure}{.4\textwidth}
\centering
\begin{tikzpicture}[scale=1.3]
\draw(-2,1) circle[x radius=0.2, y radius=0.5];
\draw(-2,-1) circle[x radius=0.2, y radius=0.5];
\draw(-2,0.5) to [out=0,in=90] (-1.5,0) to [out=270,in=0] (-2,-0.5);
\draw(-2,1.5) to[out=0,in=180] (0,0.5);
\draw(-2,-1.5) to[out=0,in=180] (0,-0.5);

\fill[red!30!white] (0,-0.5) to[out=135,in=225] (0,0.5)
to (0.17,0.5) to[out=190,in=170] (0.17,-0.5) --cycle;
\draw[dashed,color=black!60!white](0,-0.5) to[out=135,in=225] (0,0.5)
to (0.17,0.5) to[out=190,in=170] (0.17,-0.5) --cycle;
\fill[red](0,-0.5) to[out=0,in=0] (0,0.5)
to (0.1,0.5) to[out=10,in=-10] (0.1,-0.5 ) --cycle;
\draw(0,-0.5) to[out=0,in=0] (0,0.5)
to (0.1,0.5) to[out=10,in=-10] (0.1,-0.5) --cycle;

\draw(2,1) circle[x radius=0.2, y radius=0.5];
\draw(2,-1) circle[x radius=0.2, y radius=0.5];
\draw(2,0.5) to [out=180,in=90] (1.5,0) to [out=270,in=180] (2,-0.5);
\draw(2,1.5) to[out=180,in=0] (0,0.5);
\draw(2,-1.5) to[out=180,in=0] (0,-0.5);

\node at ($(-2,-1)$) {$0$};
\node at ($(-2,1)$) {$t$};
\node at ($(2,1)$) {$1$};
\node at ($(2,-1)$) {$\infty$};
\node at ($(0.1,0)$) {$\cC$};
\node at (-1,0) {\Large $\mathcal{T}_-$};
\node at (1,0) {\Large $\mathcal{T}_+$};

\end{tikzpicture}
\caption{Pants decomposition of four-punctured sphere corresponding to the contour $\Gamma$. The two pants are glued along the circle $\cC$.}
\label{fig:4ptPants}
\end{subfigure}
\end{center}
\caption{}
\label{fig:4point}
\end{figure}

To a fundamental matrix solution of \eqref{eq:4ptsyst} we can associate a piecewise defined function
\begin{equation}
\Psi(z):= \begin{cases}
G^{(a)}(z), & z\in D_a,\quad, a=0,1,t, \\
\Phi(z), &z\in\mathcal{T}_-\cup\mathcal{T}_+,\, z\notin,\Gamma, \\
(-z)^{\mathfrak{S}}\Phi(z), & z\notin\Gamma.
\end{cases}
\end{equation}
This function solves a \textit{dual} RHP (i.e. with appropriate jumps factorized as in \eqref{eq:JumpDual}) on $\Gamma$, which we do not specify since they will not be needed in the following (see \cite{Gavrylenko2016b, Cafasso2017} for further details).

To apply the Widom constant formalism, we need to reduce this dual Riemann-Hilbert problem to a \textit{direct} Riemann-Hilbert problem on a circle with factorization as in \eqref{eq:JumpRHP}, written in terms of known functions $\Psi_{\pm}$. To do this, consider the solutions $\Phi_\pm$ of two auxiliary 3-point Fuchsian linear systems 
on the two trinions (i.e. {\em pairs of pants}, obtained by replacing the three punctures at the poles by oval boundaries).
\begin{align}
\partial_z\Phi_-(z)=\Phi_-(z)A_-(z)\quad A_-(z)=\frac{A_{0,-}}{z}+\frac{A_{t,-}}{z-t}, \nonumber\\
\partial_z\Phi_+(z)=\Phi_+(z)A_+(z)\quad A_+(z)=\frac{A_{0,+}}{z}+\frac{A_{1,+}}{z-1},
\end{align}
where $A_{k,\pm}\sim\Theta_k$ are $N\times N$ matrices in the representation $\rho$ of $\mathfrak{g}$, normalized as
\begin{align}
\Phi_+(z)\simeq (-z)^{\mathfrak{S}}, \quad z\rightarrow 0, && \Phi_-(z)\simeq (-z)^{\mathfrak{S}},\quad z\rightarrow\infty.
\end{align}
By restricting the dual RHP to the pants $\T_\pm$ in Figure \ref{fig:4point} we obtain two solutions $\Psi_\pm$ of 
two three-point problems, which on the circle $\mathcal{C}$ are related to $\Phi_{\pm}$ by
\begin{equation}
\Psi_\pm(z)=(-z)^{-\mathfrak{S}}\Phi_{\pm}(z),\qquad z\in\cC.
\end{equation}
Since $\Psi_\pm$ have the same jumps as $\Psi$ inside the trinions $\T_{\pm}$ respectively, the function
\begin{equation}
\bar{\Psi}(z):= \begin{cases}
\Psi_{+}(z)^{-1}\Psi(z):=\bar{\Psi}_-(z), & \text{outside } \cC, \\
\Psi_{-}(z)^{-1}\Psi(z):=\bar{\Psi}_+(z), & \text{inside } \cC
\end{cases}
\end{equation}
is single-valued everywhere apart from the circle $\cC$, where it has a jump admitting the two factorizations\footnote{Note that in the previous sections we dealt with Riemann-Hilbert problems on the unit circle, while here the circle has radius $R$. Everything is easily generalized to this case by reinserting the radius $R$ in the expressions where needed. All our quantities depend only on the splitting of the space $L^2(S^1)$ into the subspace $\HH^N_+$ of functions admitting analytic continuation inside the circle and those $\HH^N_-$ admitting analytic continuation outside.}
\begin{equation}
J(z)=\bar{\Psi}_+(z)\bar{\Psi}_-(z)^{-1} =\Psi_-(z)^{-1}\Psi_+(z).
\end{equation}

In \cite{Cafasso2017} it was shown that the Widom $\tau$-function \eqref{eq:WidomTauDef}, with kernel written in terms of the 
functions $\Psi_\pm$ defined above, is related to the JMU $\tau$-function for the linear system \eqref{eq:4ptsyst}, defined by
\begin{equation}\label{eq:JMUSLN}
\partial_t\log\tau_{JMU}^{\SL(N)}=\frac{1}{2}\res_{z=t}\tr A(z)^2=\frac{\tr A_0A_t}{t}+\frac{A_tA_1}{t-1},
\end{equation}
in the following way:
\begin{equation}
\tau_{JMU}^{\SL(N)}(t)=\text{const.}\,t^{\frac{1}{2}\tr\left(\mathfrak{S}^2-\Theta_0^2-\Theta_t^2\right)}\tau_W(t).
\end{equation}
Although the proof in \cite{Cafasso2017}, based on comparing the logarithmic $t$-derivative of $\tau_W$ to \eqref{eq:JMUSLN}, 
was given only for the $\SL(N)$ case, it applies to this more general context without any modification, if the RHP is formulated analogously
to the above. 
For an arbitrary classical Lie group $G$, the definition of the JMU $\tau$-function can be expressed as
\begin{equation}
\partial_t\log\tau_{\JMU}^{G}=\frac{\kappa}{2}\res_{z=t}\tr\left( A(z)^2\right),
\end{equation}
where $\kappa=1$ for $\SL(N)$ and $\Sp(N)$ and $\kappa=\frac{1}{2}$ for $SO(N)$. Specializing to $SO(N)$, we have
\begin{equation}
\begin{split}
\partial_t\log\tau_{\JMU}^{SO(N)} & =\frac{1}{4}\res_{z=t}\tr A(z)^2=\frac{\tr A_0A_t}{2t}+\frac{A_tA_1}{2(t-1)}=\frac{1}{2} \partial_t\log\left(t^{\frac{1}{2}\tr\left(\mathfrak{S}^2-\Theta_0^2-\Theta_t^2\right)}\tau_W(t) \right) \\
& = \partial_t\log\left(t^{\frac{1}{4}\tr\left(\mathfrak{S}^2-\Theta_0^2-\Theta_t^2\right)}\tau_O[J](t) \right),
\end{split}
\end{equation}
and
\begin{equation}\label{eq:JMUO}
\tau_{\JMU}^{SO(N)}=\text{const.}\,t^{\frac{1}{4}\tr\left(\mathfrak{S}^2-\Theta_0^2-\Theta_t^2\right)}\tau_O[J].
\end{equation}
This may be generalized to an arbitrary number of Fuchsian singularities, 
or to irregular singularities of higher Poincar\'e rank at $0$ and $\infty$,  
following the $\SL(N)$ construction of \cite{Cafasso2017}.

\subsection{The $SO(3)$ isomonodromic $\tau$-function}
In the case of an $\SL(2,\mathbb{C})$ linear system, it is well-known that the solution to the 3-point problem can be written explicitly in terms of hypergeometric functions $_2F_1$. We recall the expressions for the solution of the corresponding $\SL(2)$ RHP, which we denote by $\Psi_\pm^{(2)}$, from \cite{Gavrylenko2016b}: {\footnotesize
\begin{equation}\label{eq:SL2solp}
\begin{split}
\Psi^{(2)}_+(z) & =\left( \begin{array}{cc}
\pFq{1}{2}{\theta_1+\theta_\infty+\sigma,\theta_1-\theta_\infty+\sigma}{2\sigma}{z} & z\frac{\theta_\infty^2-(\theta_1+\sigma)^2}{2\sigma(1+2\sigma)}\pFq{1}{2}{1+\theta_1+\theta_\infty+\sigma,1+\theta_1-\theta_\infty+\sigma}{2+2\sigma}{z} \\
-z\frac{\theta_\infty^2-(\theta_1-\sigma)^2}{2\sigma(1-2\sigma)}\pFq{1}{2}{1+\theta_1+\theta_\infty-\sigma,1+\theta_1-\theta_\infty-\sigma}{2-2\sigma}{z} & \pFq{1}{2}{\theta_1+\theta_\infty-\sigma,\theta_1-\theta_\infty-\sigma}{-2\sigma}{z}
\end{array} \right)\\
& := \left( \begin{array}{cc}
K_{++}(z) & K_{+-}(z) \\
K_{-+}(z) & K_{--}(z)
\end{array} \right) .
\end{split}
\end{equation}
\begin{equation}\label{eq:SL2solm}
\begin{split}
\Psi^{(2)}_-(z) & =\left( \begin{array}{cc}
\pFq{1}{2}{\theta_t+\theta_0-\sigma,\theta_t-\theta_0-\sigma}{-2\sigma}{\frac{t}{z}} & -t^{-2\sigma}e^{-i\eta}\frac{t}{z}\frac{\theta_0^2-(\theta_t-\sigma)^2}{2\sigma(1-2\sigma)}\pFq{1}{2}{1+\theta_t+\theta_0-\sigma,1+\theta_t-\theta_0-\sigma}{2-2\sigma}{\frac{t}{z}} \\
t^{2\sigma}e^{i\eta}\frac{t}{z}\frac{\theta_0^2-(\theta_t+\sigma)^2}{2\sigma(1+2\sigma)}\pFq{1}{2}{1+\theta_t+\theta_0+\sigma,1+\theta_t-\theta_0+\sigma}{2+2\sigma}{\frac{t}{z}} & \pFq{1}{2}{\theta_t+\theta_0+\sigma,\theta_t-\theta_0+\sigma}{2\sigma}{\frac{t}{z}}
\end{array} \right) \\
&:= \left( \begin{array}{cc}
\widetilde{K}_{++}(z) & \widetilde{K}_{+-}(z) \\
\widetilde{K}_{-+}(z) & \widetilde{K}_{--}(z)
\end{array} \right) .
\end{split} .
\end{equation} }

While the $SO(3)$ $3$-point linear system might at first seem  harder to solve, we can still express its solution 
in terms of hypergeometric functions, simply by changing the representation.
\begin{proposition}\label{prop:SL2toSO3}
Let $\Phi^{(2)}(z)$ be the solution to the $2\times 2$ linear system
\be
\partial_z\Phi^{(2)}(z)=\Phi^{(2)}(z)A^{(2)}(z), \quad A(z)=\sum_{\alpha=1}^3A^{\alpha}(z)\sigma_\alpha,
\ee
where $A^a(z)$ are rational functions and $\{\sigma_\alpha\}_{\alpha=1,2,3}$ are the Pauli matrices. 
The $3\times 3$ matrix $\Phi(z)$ with components
\begin{equation}\label{eq:SL2toSO3}
\Phi_{\alpha\beta}(z)=\frac{1}{2}\tr\left(\sigma_i\Phi^{(2)}(z)\sigma_j\Phi^{(2)}(z)^{-1} \right),\quad\alpha,\beta=1,\dots,3
\end{equation}
satisfies $\Phi(z)\Phi(z)^t=\mathbb{I}_3$ and solves the $3\times 3$ linear system with rational coefficients
\be
\label{eq:SL2toSO3sys}
\partial_z\Phi(z)=\Phi(z)A^{(3)}(z), \quad A^{(3)}(z)=-2i\sum_{\alpha=1}^3A^{\alpha}(z)J_\alpha ,
\ee
where we have introduced the $SO(3)$ generators
\begin{equation}
(J_\alpha)_{\beta\gamma}=\epsilon_{\alpha\beta\gamma}.
\end{equation}
\end{proposition}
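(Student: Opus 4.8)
The plan is to identify $\Phi(z)$ with the matrix, in the orthonormal Pauli basis, of the adjoint action $\mathrm{Ad}\bigl(\Phi^{(2)}(z)\bigr)$ on the three-dimensional space $\mathfrak{sl}(2,\mathbb{C})$ of traceless $2\times2$ matrices, and then to invoke the classical fact that $\mathrm{Ad}\colon\SL(2,\mathbb{C})\to SO(3,\mathbb{C})$ is a Lie group homomorphism with differential $\ad$. First I would note that, since $A^{(2)}(z)=\sum_{\alpha}A^{\alpha}(z)\sigma_\alpha$ is traceless, $\det\Phi^{(2)}(z)$ is constant in $z$; since moreover the right-hand side of \eqref{eq:SL2toSO3} is unchanged when $\Phi^{(2)}$ is rescaled by a scalar function, one may assume without loss of generality that $\Phi^{(2)}(z)\in\SL(2,\mathbb{C})$. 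Using $\tfrac12\tr(\sigma_\alpha\sigma_\gamma)=\delta_{\alpha\gamma}$ and the fact that conjugation preserves tracelessness, formula \eqref{eq:SL2toSO3} becomes equivalent to $\Phi^{(2)}(z)\,\sigma_\beta\,\Phi^{(2)}(z)^{-1}=\sum_{\alpha=1}^{3}\Phi_{\alpha\beta}(z)\,\sigma_\alpha$.

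For the orthogonality $\Phi\Phi^{t}=\mathbb{I}_3$, I would use that the symmetric bilinear form $\langle X,Y\rangle:=\tfrac12\tr(XY)$ on traceless matrices is invariant under conjugation (by cyclicity of the trace) and has Gram matrix $\mathbb{I}_3$ in the basis $\{\sigma_\alpha\}$, since $\sigma_\alpha\sigma_\beta=\delta_{\alpha\beta}\mathbb{I}_2+i\epsilon_{\alpha\beta\gamma}\sigma_\gamma$. Pairing the displayed identity with itself then gives $(\Phi^{t}\Phi)_{\beta\beta'}=\langle\sigma_\beta,\sigma_{\beta'}\rangle=\delta_{\beta\beta'}$, so $\Phi^{t}\Phi=\mathbb{I}_3$ and hence also $\Phi\Phi^{t}=\mathbb{I}_3$. (Equivalently, $\det$ restricted to traceless matrices equals $-\sum_\alpha x_\alpha^2$ and is conjugation-invariant, so $\Phi\in O(3,\mathbb{C})$, with $\det\Phi\equiv1$ by continuity from $\Phi^{(2)}=\mathbb{I}_2$.)

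For the linear system, I would differentiate \eqref{eq:SL2toSO3}, substitute $\partial_z\Phi^{(2)}=\Phi^{(2)}A^{(2)}$ and $\partial_z(\Phi^{(2)})^{-1}=-A^{(2)}(\Phi^{(2)})^{-1}$, and observe that the two resulting terms combine into a commutator, yielding $\partial_z\Phi_{\alpha\beta}=\tfrac12\tr\bigl(\sigma_\alpha\Phi^{(2)}[A^{(2)},\sigma_\beta](\Phi^{(2)})^{-1}\bigr)$. Expanding $[A^{(2)},\sigma_\beta]=\sum_\gamma A^{\gamma}[\sigma_\gamma,\sigma_\beta]=2i\sum_{\gamma,\delta}A^{\gamma}\epsilon_{\gamma\beta\delta}\,\sigma_\delta$ and applying \eqref{eq:SL2toSO3} once more gives $\partial_z\Phi_{\alpha\beta}=2i\sum_{\gamma,\delta}\Phi_{\alpha\delta}\,\epsilon_{\gamma\beta\delta}A^{\gamma}$; since $\epsilon_{\gamma\beta\delta}=-\epsilon_{\gamma\delta\beta}=-(J_\gamma)_{\delta\beta}$, the right-hand side is exactly $\bigl(\Phi A^{(3)}\bigr)_{\alpha\beta}$ with $A^{(3)}=-2i\sum_\gamma A^{\gamma}J_\gamma$, which proves \eqref{eq:SL2toSO3sys}. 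Rationality of $A^{(3)}(z)$ is immediate, since it is a fixed linear combination of the rational functions $A^{\alpha}(z)$. (Alternatively, the differential equation and the rationality follow at once from the functoriality of $\mathrm{Ad}$: the right-acting connection pushes forward as $A^{(3)}=\ad_{*}A^{(2)}$, and $\ad(\sigma_\alpha)=-2iJ_\alpha$ in the Pauli basis.)

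The only delicate points — and the nearest thing to an obstacle here — are purely a matter of bookkeeping: keeping track of the signs of the Levi-Civita symbol and of the factor $2i$ arising from $[\sigma_a,\sigma_b]=2i\epsilon_{abc}\sigma_c$, and respecting that both linear systems are written with the connection acting on the right ($\partial_z\Phi=\Phi A$), so that conjugation by $\Phi^{(2)}$ indeed produces a right-acting connection $A^{(3)}$ on the $SO(3)$ side rather than its transpose.
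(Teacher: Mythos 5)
Your proof is correct. Conceptually it rests on the same fact as the paper's — that \eqref{eq:SL2toSO3} is just the matrix of $\mathrm{Ad}\bigl(\Phi^{(2)}(z)\bigr)$ in the Pauli basis, i.e.\ the $\SL(2)\to SO(3)$ covering homomorphism — but the verification is carried out along a genuinely different route. The paper substitutes the Euler-angle parametrization $\Phi^{(2)}=e^{\varphi_1\sigma_3}e^{\varphi_2\sigma_2}e^{\varphi_3\sigma_3}$ and checks \eqref{eq:SL2toSO3sys} by explicit computation, whereas you differentiate the trace formula directly, use $\partial_z(\Phi^{(2)})^{-1}=-A^{(2)}(\Phi^{(2)})^{-1}$ to produce the commutator $[A^{(2)},\sigma_\beta]$, and then close the computation with $[\sigma_\gamma,\sigma_\beta]=2i\epsilon_{\gamma\beta\delta}\sigma_\delta$ and a second application of \eqref{eq:SL2toSO3}; orthogonality comes from the $\mathrm{Ad}$-invariance of $\tfrac12\tr(XY)$ rather than from the parametrization. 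Your version buys a coordinate-free, globally valid argument (Euler angles are only a local chart on the group, and your preliminary remark that the formula is insensitive to scalar rescaling of $\Phi^{(2)}$ is a useful normalization the paper leaves implicit), at the cost of some index bookkeeping; the paper's version is shorter to state but defers the actual content to an unwritten explicit computation. The sign and factor-of-$2i$ accounting in your final step, $\epsilon_{\gamma\beta\delta}=-(J_\gamma)_{\delta\beta}$, correctly reproduces $A^{(3)}=-2i\sum_\gamma A^\gamma J_\gamma$ with the connection acting on the right.
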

\begin{proof}
This follows from the isomorphism between representations of $\SL(2)$ and $SO(3)$ which t can be verified directly 
using the Euler angle representation
\begin{equation}
\Phi^{(2)}(z)=e^{\varphi_1(z)\sigma_3}e^{\varphi_2(z)\sigma_2}e^{\varphi_3(z)\sigma_3},
\end{equation}
which allows us to verify \eqref{eq:SL2toSO3sys} by explicit computation.
\end{proof}

Proposition \ref{prop:SL2toSO3} gives  a way to compute the $\tau$-function for the $SO(3)$ $4$-point isomonodromic problem; just transform the $\SL(2)$ solutions $\Psi_\pm^{(2)}$ into $SO(3)$ solutions $\Psi_\pm^{(3)}$ using \eqref{prop:SL2toSO3}, which then can be used to define the Pfaffian $\tau$-function \eqref{eq:TauFredPf} through the operators $\sfa,\sfd$ defined in eqs.~\eqref{eq:adKers}. The resulting $3\times 3$ kernels will have explicit expressions in terms of bilinear combinations of hypergeometric functions. Proposition \ref{prop:SL2toSO3} allows us to write an explicit Pfaffian $\tau$-function whenever the corresponding $\SL(2)$ kernel entering in the Widom $\tau$-function is known. This includes the case of Painlev\'e $VI$ or $III$ equations, as well as the $\SL(2)$ Schlesinger system
 \cite{Gavrylenko2016b, Gavrylenko2017, Cafasso2017}.

In fact,  there is an even more direct relation between the $SO(3)$ $\tau$-function and the $\SL(2)$ $\tau$-function:
\begin{theorem}\label{thm:SL2toSO3}
Let $\tau_W[J^{(2)}]$ be the Widom $\tau$-function \eqref{eq:WidomTauDef}, associated to a Riemann-Hilbert problem 
on a circle with jump $J^{(2)}(z)$, admitting direct and dual factorizations
\begin{equation}\label{eq:SL2Jump}
J^{(2)}(z)=\Psi_-^{(2)}(z)^{-1}\Psi_+^{(2)}(z)=\bar{\Psi}^{(2)}_+(z)\bar{\Psi}^{(2)}_-(z)^{-1}\in L\SL(2),
\end{equation}
and let $\tau_O[J^{(3)}]$ be the $SO(3)$ $\tau$-function \eqref{eq:TauD}, associated to a RHP whose solution is 
obtained from \eqref{eq:SL2Jump} through the map \eqref{eq:SL2toSO3}. Then
\begin{equation}
\tau_O[J^{(3)}]=\left(\tau_W[J^{(2]}]\right)^2.
\end{equation}
\end{theorem}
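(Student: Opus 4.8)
The plan is to derive the identity from the Jimbo--Miwa--Ueno ($\JMU$) interpretation of both $\tau$-functions, the main input being that the adjoint representation of $\SL(2)$ has Dynkin index $4$ relative to the fundamental; Theorem~\ref{thm:RealFTau} then serves as a consistency check and suggests an alternative. First I would record the structural facts. By Proposition~\ref{prop:SL2toSO3} the map \eqref{eq:SL2toSO3} is, in the basis of Pauli matrices, the adjoint representation $\ad$ of $\SL(2)$ on $\mathfrak{sl}_2\cong\C^3$; hence $\Psi^{(3)}_\pm=\ad\,\Psi^{(2)}_\pm$ and $J^{(3)}=\ad\,J^{(2)}$ satisfy the orthogonal reduction condition \eqref{eq:ORHP} with $S=\mathbb I_3$ (so Theorem~\ref{thm:RealFTau} applies to $J^{(3)}$), the connection matrices obey $A^{(3)}(z)=\ad\,A^{(2)}(z)$, i.e.\ $A^{(3)}=-2i\sum_\alpha A^\alpha J_\alpha$ for $A^{(2)}=\sum_\alpha A^\alpha\sigma_\alpha$, and the local exponents and the exponent $\mathfrak S$ of \eqref{eq:SigmaDef} may be chosen as $\Theta_k^{(3)}=\ad\,\Theta_k^{(2)}$ and $\mathfrak S^{(3)}=\ad\,\mathfrak S^{(2)}$, which resolves the root-lattice ambiguity in $\mathfrak S$.

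Next I would carry out the index computation. Using $\operatorname{tr}_3(J_\alpha J_\beta)=-2\delta_{\alpha\beta}$ and $\operatorname{tr}_2(\sigma_\alpha\sigma_\beta)=2\delta_{\alpha\beta}$ one finds
\[
\operatorname{tr}_3\!\big(A^{(3)}(z)^2\big)=4\,\operatorname{tr}_2\!\big(A^{(2)}(z)^2\big)
\]
and, likewise, $\operatorname{tr}_3(\mathfrak S^2-\Theta_0^2-\Theta_t^2)=4\,\operatorname{tr}_2(\mathfrak S^2-\Theta_0^2-\Theta_t^2)$. Combining the first equality with $\partial_t\log\tau^{SO(N)}_{\JMU}=\tfrac14\res_{z=t}\operatorname{tr}A(z)^2$ and with \eqref{eq:JMUSLN} gives $\partial_t\log\tau^{SO(3)}_{\JMU}=\res_{z=t}\operatorname{tr}_2 A^{(2)}(z)^2=2\,\partial_t\log\tau^{\SL(2)}_{\JMU}$, so $\tau^{SO(3)}_{\JMU}$ and $\big(\tau^{\SL(2)}_{\JMU}\big)^2$ agree up to a $t$-independent factor. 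Substituting \eqref{eq:JMUO} for the left side and the (unlabelled) $\SL(N)$ relation $\tau^{\SL(N)}_{\JMU}=\mathrm{const}\cdot t^{\frac12\operatorname{tr}(\mathfrak S^2-\Theta_0^2-\Theta_t^2)}\tau_W(t)$ following \eqref{eq:JMUSLN}, specialized to $N=2$, and using the second equality above to see that $t^{\frac14\operatorname{tr}_3(\cdots)}$ equals $\big(t^{\frac12\operatorname{tr}_2(\cdots)}\big)^2$, the powers of $t$ and the Malgrange-type normalization constants cancel and one is left with $\tau_O[J^{(3)}]=C\,\tau_W[J^{(2)}]^2$ for some constant $C$. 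Since the Fredholm Pfaffian $\tau_O[J^{(3)}]$ is normalized so that its leading term is $+1$ (cf.\ \eqref{eq:TauFredPf}), and so is $\tau_W[J^{(2)}]^2$, this forces $C=1$.

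As a cross-check, Theorem~\ref{thm:RealFTau} gives $\tau_W[J^{(3)}]=\tau_O[J^{(3)}]^2$, so the assertion is equivalent to $\tau_W[\ad\,J^{(2)}]=\tau_W[J^{(2)}]^4$, namely that the Widom constant scales by the Dynkin index under a change of matrix representation. One piece of this is elementary: $\tau_W$ is invariant under constant conjugation of the RHP data and is multiplicative over direct sums of representations (immediate from \eqref{eq:TauWF}, since the multicomponent Fock space factors as a tensor product and the implementing operators factor accordingly); together with the decomposition $\C^2\otimes\C^2\cong\C^3\oplus\C$ of $\SL(2)$-modules (adjoint $\oplus$ trivial, the trivial summand carrying $\det\Psi^{(2)}_\pm=1$), this yields $\tau_W[J^{(2)}\otimes J^{(2)}]=\tau_W[\ad\,J^{(2)}]\cdot\tau_W[\mathbb I_1]=\tau_W[\ad\,J^{(2)}]$.

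The step I expect to be the genuine obstacle is the remaining one, $\tau_W[J^{(2)}\otimes J^{(2)}]=\tau_W[J^{(2)}]^4$ (equivalently, pinning the constant $C$): multiplicativity over direct sums does not fix $\tau_W$ on the individual irreducible summands, and $\tau_W$ is \emph{not} multiplicative over products of (even commuting) jumps because of the Segal--Wilson cocycle, so $\tau_W[J^{(2)}\otimes J^{(2)}]=\tau_W[(J^{(2)}\otimes\mathbb I)(\mathbb I\otimes J^{(2)})]$ cannot simply be split. The safe route is therefore to stay within the isomonodromic setting, where the index computation of the second paragraph carries the argument and $C$ is fixed either by the explicit Barnes-$G$ normalizations of Cafasso--Gavrylenko--Lisovyy or by degenerating to a configuration in which both $\tau$-functions equal $1$. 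This argument also accounts for Remark~\ref{SO3SL2}, through the analogous index count for the orthogonal Drinfeld--Sokolov reduction.
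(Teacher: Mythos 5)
Your main argument is correct and rests on the same computation as the paper's --- the Dynkin-index identity $\tr_{3}\bigl(\ad X\,\ad Y\bigr)=4\,\tr_{2}(XY)$ for $\mathfrak{sl}_2$, combined with the relative factor $\tfrac12$ between the orthogonal and general-linear normalizations --- but you reach it by a detour. The paper applies the variational formula of Cafasso--Gavrylenko--Lisovyy for $\partial_\xi\log\tau_W$ (and its $\tau_O$ analogue, carrying the extra $\tfrac12$) directly, with $\xi$ an \emph{arbitrary} parameter: isospectral or isomonodromic time, or a monodromy coordinate. The trace identity then gives $\partial_\xi\log\tau_O[J^{(3)}]=2\,\partial_\xi\log\tau_W[J^{(2)}]$ for every $\xi$ simultaneously, so the ratio is a genuine constant on the whole deformation space and the theorem holds for any jump in $L\SL(2)$ admitting the two factorizations --- including the Drinfeld--Sokolov situation of Remark \ref{SO3SL2}, with no separate ``analogous index count'' needed. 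Your route through $\tau_{\JMU}$ instead specializes to the $4$-point Fuchsian system: matching only the $t$-derivative leaves the constant $C$ a priori dependent on the monodromy data, and you then need the $t^{\frac14\tr(\mathfrak S^2-\Theta_0^2-\Theta_t^2)}$ prefactor bookkeeping plus the $t\to0$ degeneration (where both minor expansions reduce to their leading term $1$) to pin $C=1$ for each monodromy datum. That does close the argument in the isomonodromic case, but it is more work and narrower than the statement; replacing \eqref{eq:JMUSLN} by the arbitrary-$\xi$ variational formula is exactly the simplification the paper makes. You are also right to distrust the tensor-product cross-check: $\tau_W$ is not multiplicative over products of jumps because of the cocycle, and the paper does not use that route.
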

\begin{proof}
If $\xi$ is any parameter (which can be an isospectral or isomonodromic time, or a monodromy coordinate) that $J^{(2)}$ depends upon, the Widom $\tau$-function satisfies (see Theorem 2.3 in \cite{Cafasso2017}):
\begin{equation}
\partial_\xi\log\tau_W[J^{(2)}]=\frac{1}{2\pi i}\oint_{S^1}\tr\left\{J^{(2)}(z)^{-1}\partial_\xi J^{(2)}(z)\left[\partial_z\bar{\Psi}^{(2)}_-(z)\bar{\Psi}_-^{(2)}(z)^{-1}+\Psi_+^{(2)}(z)^{-1}\partial_z\Psi_+^{(2)}(z) \right] \right\} dz.
\end{equation}
On the other hand, the $\tau$-function $\tau_O[J_3]$ will satisfy
\begin{equation}
\partial_\xi\log\tau_O[J^{(3)}]=\frac{1}{4\pi i}\oint_{S^1}\tr\left\{J^{(3)}(z)^{-1}\partial_\xi J^{(3)}(z)\left[\partial_z\bar{\Psi}^{(3)}_-(z)\bar{\Psi}_-^{(3)}(z)^{-1}+\Psi_+^{(3)}(z)^{-1}\partial_z\Psi_+^{(3)}(z) \right] \right\} dz.
\end{equation}
The theorem follows from the relation
\begin{equation}
\begin{split}
&\tr\left\{J^{(3)}(z)^{-1}\partial_t J^{(3)}(z)\left[\partial_z\bar{\Psi}^{(3)}_-(z)\bar{\Psi}_-^{(3)}(z)^{-1}+\Psi_+^{(3)}(z)^{-1}\partial_z\Psi_+^{(3)}(z) \right] \right\}\\
& =4\tr\left\{J^{(2)}(z)^{-1}\partial_t J^{(2)}(z)\left[\partial_z\bar{\Psi}^{(2)}_-(z)\bar{\Psi}_-^{(2)}(z)^{-1}+\Psi_+^{(2)}(z)^{-1}\partial_z\Psi_+^{(2)}(z) \right] \right\},
\end{split}
\end{equation}
which implies that for any arbitrary parameter $\xi$ on which the $\tau$-function depends, we have
\begin{equation}
\partial_\xi\log\tau_O[J^{(3)}]=2\partial_\xi\log\tau_W[J^{(2)}].
\end{equation}
\end{proof}
As an application of this theorem, using equations \eqref{eq:JMUSLN}, \eqref{eq:JMUO}, we obtain the isomonodromic $\tau$-function for the 4-point Fuchsian system on the sphere as the square of the Painlev\'e VI $\tau$-function. This also explains the remark made at the end of Section \ref{sec:B1pol}, that 
the Drinfeld-Sokolov polynomial $\tau$-functions of $B_1^{(1)}$ are squares of the corresponding $\tau$-functions of $A_1^{(1)}$.

\section{Summary and future directions}

We have shown that Widom $\tau$-functions derived from Riemann-Hilbert problems in orthogonal loop groups have Pfaffian rather than determinantal expressions, and provided an explicit formula \eqref{eq:CombExpD} for their combinatorial expansion. Using the same methods one could study other types of Drinfeld-Sokolov $\tau$-functions, such as topological $\tau$-functions of Kontsevich-Witten type in the orthogonal case \cite{fan2016witten, liu2015bcfg}. It remains to find a deeper explanation for the appearance of $Q$-Schur functions in the expansions, which suggests a relation to the BKP hierarchy \cite{jimbo_miwa_83,HB} through a suitable reduction procedure.

Computing the combinatorial expansion of $SO(N)$ isomonodromic $\tau$-function should yield an orthogonal variant of the ``Kiev formula''
\cite{Gavrylenko2016b, Gavrylenko2018a,del2020isomonodromic}, expressing isomonodromic $\tau$-functions for $\SL(N)$ linear systems in terms of conformal blocks of a $W(\mathfrak{sl}_N)$ algebra. The natural analogue would be conformal blocks of $W(\mathfrak{so}_N)$ algebras, for which no expression is available in the case of generic monodromies (see, however, \cite{Bershtein:2017lmg} for $\mathfrak{so}_N$ quasi-permutation monodromies, and \cite{Bonelli2021b} for nonautonomous Toda chains).

Another interesting variant would be the case of Riemann-Hilbert problems taking values in \textit{twisted} loop groups, using multicomponent {\em neutral}
fermions.  Problems of this type appear in the study of Frobenius manifolds and topological quantum field theories \cite{Dubrovin1994, feigin2010givental}.

\break

\appendix


\section{Appendix. A polynomial $\tau$-function of $B^{(1)}_2$ type}
\label{B_2_1}
Recall equation \eqref{eq:stricthi} in the main text, representing the map between decreasing sequences of positive half-integers $\{p_{\alpha,i}\}_{i=1}^{n_\alpha}$ and strict partitions
\begin{equation}
\lambda^{(\alpha)}=\left(p_{\alpha,1}-\frac{1}{2},\dots,p_{\alpha,n_{\alpha}}-\frac{1}{2}\right) .
\end{equation}
In the tables below we  group the coefficients in the Pfaffian minor expansions by the
total {\em weight} of the $N$-tuple of (positive) strict partitions obtained by adding $1$ to each part:  
\begin{equation}
|\vec{\lambda}|:=\sum_{\alpha=1}^N\sum_{i=1}^{n_\alpha}(p_{\alpha,i}+\frac{1}{2}) ,
\end{equation}
counting the overall power of $z$ (resp. $z^{-1}$) in the Fourier expansion of $\sfa$ (respectively $\sfd$) that contribute to the Pfaffian  minor. For example
\begin{equation}
\Pf\left( \begin{array}{cc}
0 & \left(S\sfa^{\frac{1}{2}}_{-\frac{1}{2}} \right)_{12} \\
\left(S\sfa^{\frac{1}{2}}_{-\frac{1}{2}} \right)_{12} & 0
\end{array} \right)
\end{equation}
has weight $|\vec{\lambda}|=2$, since it contains two Fourier coefficients of $\sfa$ of weight  $0$ that multiply $z^0w^0$, while
\begin{equation}
\Pf\left( \begin{array}{cc}
0 & \left(S\sfa^{\frac{3}{2}}_{-\frac{3}{2}} \right)_{12} \\
\left(S\sfa^{\frac{3}{2}}_{-\frac{3}{2}} \right)_{12} & 0
\end{array} \right)
\end{equation}
has weight $|\vec{\lambda}|=4$, since it contains two Fourier coefficients of $\sfa$ of weight $2$, that multiply $zw$.


\subsection{The Pfaffian $\sfd$ coefficients}
\label{B_2_1_d_coeffs}

\bigskip
{\small \begin{equation}
\begin{array}{c|c}
\Pf(\sfd_{\vec{\lambda}}) & \vec{\lambda},\,|\vec{\lambda}|=2 \\
\hline
 -a_2\, & \left(
\begin{array}{ccccc}
 \emptyset , & (0) , & \emptyset , & \emptyset , & (0)  \\
\end{array}
\right) 
\\
 a_5\, & \left(
\begin{array}{ccccc}
 \emptyset , & \emptyset , & (0) , & (0) , & \emptyset  \\
\end{array}
\right)
 \\
 -a_3\, & \left(
\begin{array}{ccccc}
 \emptyset , & \emptyset , & (0) , & \emptyset , & (0)  \\
\end{array}
\right) 
\\
 -a_4\, & \left(
\begin{array}{ccccc}
 \emptyset , & \emptyset , & \emptyset , & (0) , & (0)  \\
\end{array}
\right) 
\\
\hline
\phantom{\Pf(\sfd_{\vec{\lambda}})} & \vec{\lambda},\,|\vec{\lambda}|=3 \\
\hline
 \frac{a_5^2}{2}\, &\, \left(
\begin{array}{ccccc}
 \emptyset , & \emptyset , & \emptyset , & (1,0) , & \emptyset  \\
\end{array}
\right) 
\\
 \frac{1}{2} \left(a_3^2-2 a_2 a_4\right)\, &\, \left(
\begin{array}{ccccc}
 \emptyset , & \emptyset , & \emptyset , & \emptyset , & (1,0)  \\
\end{array}
\right)
 \\
 -\frac{1}{2} a_2 a_5\, &\, \left(
\begin{array}{ccccc}
 \emptyset , & \emptyset , & (1) , & \emptyset , & (0)  \\
\end{array}
\right) 
\\
 \frac{a_2 a_5}{2}\, &\, \left(
\begin{array}{ccccc}
 \emptyset , & \emptyset , & (0) , & \emptyset , & (1)  \\
\end{array}
\right) 
\\
 -\frac{1}{2} a_3 a_5\, &\, \left(
\begin{array}{ccccc}
 \emptyset , & \emptyset , & \emptyset , & (1) , & (0)  \\
\end{array}
\right) \\
 \frac{a_3 a_5}{2}\, &\, \left(
\begin{array}{ccccc}
 \emptyset , & \emptyset , & \emptyset , & (0) , & (1)  \\
\end{array}\right) \\
\hline
\phantom{\Pf(\sfd_{\vec{\lambda}})} & \vec{\lambda},\,|\vec{\lambda}|=4 \\
\hline
 -\frac{1}{6} a_2 a_5^2 & \left(
\begin{array}{ccccc}
 \emptyset , & \emptyset , & \emptyset , & (2) , & (0)  \\
\end{array}
\right) \\
 -\frac{1}{6} a_2 a_5^2 & \left(
\begin{array}{ccccc}
 \emptyset , & \emptyset , & \emptyset , & (0) , & (2)  \\
\end{array}
\right) \\
\frac{1}{3} a_2 a_5^2 & \left(
\begin{array}{ccccc}
 \emptyset , & \emptyset , & \emptyset , & (1) , & (1)  \\
\end{array}
\right) \\
 -a_2 a_5 & \left(
\begin{array}{ccccc}
 \emptyset , & (0) , & (0) , & (0) , & (0)  \\
\end{array}
\right) 
\nonumber
\\
\end{array}
\end{equation} 
}

\subsection{The Pfaffian $\sfa$ coefficients}
\label{B_2_1_a_coeffs}

\bigskip

{\small
\begin{align}
\begin{array}{c|c}
\Pf(\sfa_{\vec{\lambda}}) & \vec{\lambda},\,|\vec{\lambda}|=2 \\
\hline
 -\frac{1}{12} \left(t_1^3+6 t_3\right)=-\frac{1}{2}Q_{(3,0)} & \left(
\begin{array}{ccccc}
 \emptyset , & (0) , & \emptyset , & \emptyset , & (0)  \\
\end{array}
\right) 
\\
 t_3-\frac{t_1^3}{12}=-\frac{1}{2}Q_{(2,1)} &  \left(
\begin{array}{ccccc}
 \emptyset , & \emptyset , & (0) , & (0) , & \emptyset  \\
\end{array}
\right) 
\\
 -\frac{t_1^2}{4}=-\frac{1}{2}Q_{(2,0)} & \left(
\begin{array}{ccccc}
 \emptyset , & \emptyset , & (0) , & \emptyset , & (0)  \\
\end{array}
\right) 
\\
 -\frac{t_1}{2}=-\frac{1}{2}Q_{(1,0)} & \left(
\begin{array}{ccccc}
 \emptyset , & \emptyset , & \emptyset , & (0) , & (0)  \\
\end{array}
\right)
 \\
\hline
 & \vec{\lambda},\,|\vec{\lambda}|=3 \\
\hline
\frac{1}{360} \left(-t_1^6-30 t_3 t_1^3+180 t_5 t_1+180 t_3^2\right)=-\frac{1}{2}Q_{(5,1)} & \left(
\begin{array}{ccccc}
 \emptyset , & \emptyset , & \emptyset , & (1,0) , & \emptyset  \\
\end{array}
\right)
 \\
 -\frac{1}{96} t_1 \left(t_1^3+24 t_3\right)=-\frac{1}{4}Q_{(4,0)} & \left(
\begin{array}{ccccc}
 \emptyset , & \emptyset , & \emptyset , & \emptyset , & (1,0)  \\
\end{array}
\right) 
\\
 -\frac{t_1^6}{1440}-\frac{1}{12} t_3 t_1^3-\frac{t_5 t_1}{2}-\frac{t_3^2}{4}=-\frac{1}{2}Q_{(6,0)} & \left(
\begin{array}{ccccc}
 \emptyset , & \emptyset , & (1) , & \emptyset , & (0)  \\
\end{array}
\right) 
\\
 \frac{1}{576} \left(t_1^3-12 t_3\right)^2=\frac{1}{4}Q_{(4,2)} & \left(
\begin{array}{ccccc}
 \emptyset , & \emptyset , & (0) , & \emptyset , & (1)  \\
\end{array}
\right) 
\\
 -\frac{t_1^5}{240}-\frac{1}{4} t_3 t_1^2-\frac{t_5}{2}=-\frac{1}{2}Q_{(5,0)} & \left(
\begin{array}{ccccc}
 \emptyset , & \emptyset , & \emptyset , & (1) , & (0)  \\
\end{array}
\right) 
\\
 \frac{1}{160} \left(t_1^5-80 t_5\right)=\frac{1}{4}Q_{(4,1)} & \left(
\begin{array}{ccccc}
 \emptyset , & \emptyset , & \emptyset , & (0) , & (1)  \\
\end{array}
\right)
 \\
\hline
 & \vec{\lambda},\,|\vec{\lambda}|=4 \\
\hline
 -\frac{t_1^9}{725760}-\frac{t_3 t_1^6}{1440}-\frac{1}{48} t_5 t_1^4-\frac{1}{24} t_3^2 t_1^3-\frac{1}{2} t_3 t_5 t_1-\frac{t_3^3}{12}=-\frac{1}{2}Q_{(9,0)} & \left(
\begin{array}{ccccc}
 \emptyset , & \emptyset , & \emptyset , & (2) , & (0)  \\
\end{array}
\right) 
\\
 \frac{t_1^9}{207360}+\frac{1}{720} t_3 t_1^6+\frac{1}{48} t_5 t_1^4+\frac{1}{48} t_3^2 t_1^3-\frac{1}{4} t_3 t_5 t_1-\frac{t_3^3}{12}=\frac{1}{4}Q_{(8,1)} &  \left(
\begin{array}{ccccc}
 \emptyset , & \emptyset , & \emptyset , & (0) , & (2)  \\
\end{array}
\right) 
\\
 -\frac{t_1^9}{103680}+\frac{t_3 t_1^6}{2880}+\frac{1}{96} t_5 t_1^4-\frac{1}{24} t_3^2 t_1^3+\frac{1}{4} t_3 t_5 t_1+\frac{t_3^3}{6}=-\frac{1}{4}Q_{(5,4)} &  \left(
\begin{array}{ccccc}
 \emptyset , & \emptyset , & \emptyset , & (1) , & (1)  \\
\end{array}
\right)
 \\
 \frac{t_1^6-60 t_3 t_1^3+720 t_5 t_1-720 t_3^2}{1440}=\frac{1}{4}Q_{(3,2,1,0)} & \left(
\begin{array}{ccccc}
 \emptyset , & (0) , & (0) , & (0) , & (0)  \\
\end{array}
\right) 
\nonumber \\
\end{array}
\end{align}
}

\newpage
\section{Appendix.  Matrix representation of orthogonal affine Lie algebras}
\label{sec:MatRep}

In the following, $e_{ij}$ is the elementary matrix $(e_{ij})_{\alpha\beta}:=\delta_{\alpha i}\delta_{\beta j}$.

\subsection{Matrix realization of $B_\ell^{(1)}$}
$\bullet$ Weyl generators:
\bea
E_0&\&=\frac{1}{2}(e_{1,2\ell}+e_{2,2\ell+1}), \quad F_0=2\left(e_{2\ell,1}+e_{2\ell+1,2} \right), \cr
 H_0 &\& =e_{1,1}+e_{2,2}-e_{2l,2l}-e_{2l+1,2l+1}, \cr
E_i&\&=e_{i+1,i}+e_{2l+2-i,2l+1-i}, \quad F_i=e_{i,i+1}+e_{2\ell+1-i,2\ell+2-i} \cr
H_i&\&=-e_{i,i}+e_{i+1,i+1}-e_{2\ell+1-i,2\ell+1-i}+e_{2\ell+2-i,2\ell+2-i}, \quad  i=1,\dots,\ell-1, \cr
E_\ell&\&=e_{\ell+1,\ell}+e_{\ell+2,\ell+1}, \quad F_{\ell}=e_{\ell,\ell+1}+e_{\ell+1,\ell+2}, \quad  H_\ell=-e_{\ell,\ell}+e_{\ell+2,\ell+2}.
\eea
$\bullet$ Cartan matrix:
\begin{equation}
A=\left( \begin{array}{cccccc}
2 & 0 & -1 & 0 & \cdots & 0 \\
0 & 2 & -1 & & & \vdots  \\
-1 & -1 & 2 & \ddots & & 0 \\
0 & &  \ddots & \ddots & -1 & 0 \\
\vdots & & & -1 & 2 & -1 \\
0 & \cdots & 0 & 0 & -2 & 2
\end{array} \right)_{(\ell+1)\times(\ell+1)},
\end{equation}
$\bullet$ Chevalley involution:
\begin{equation}\label{eq:ChevalleyB}
S=\mathrm{antidiag}\left(1,-1,1,\dots,-1,1 \right)_{2\ell+1\times2\ell+1}=\left( \begin{array}{cccccc}
0 & 0 & \cdots &  &    & 1 \\
0 & 0 & \cdots &  & -1 & 0 \\
0 & 0 & \cdots & 1 & 0 & 0 \\
0 & 0 & \ddots & 0 & 0 & 0 \\
0 & -1 & 0 & & \cdots & 0 \\
1 & 0 & 0 & & \cdots &  0
\end{array} \right)_{2\ell+1\times2\ell+1}.
\\
\end{equation}

\newpage


\subsection{Matrix realization of $D_\ell^{(1)}$}
$\bullet$ Weyl generators:
\bea
E_0&\&=\frac{1}{2}(e_{1,2\ell-1}+e_{2,2\ell}), \quad F_0=2\left(e_{2\ell-1,1}+e_{2\ell,2}\right), \cr
 H_0&\& =e_{1,1}+e_{2,2}-e_{2l-1,2l-1}-e_{2l,2l}, \cr
E_i&\&=e_{i+1,i}+e_{2l+1-i,2l-i}, \quad F_i=e_{i,i+1}+e_{2\ell-i,2\ell+1-i} \cr
H_i&\&=-e_{i,i}+e_{i+1,i+1}-e_{2\ell-i,2\ell-i}+e_{2\ell+1-i,2\ell+1-i}, \quad i=1,\dots,\ell-1, \cr
E_\ell&\& =\frac{1}{2}\left(e_{\ell+1,\ell-1}+e_{\ell+2,\ell}\right), \quad F_{\ell}=2\left(e_{\ell-1,\ell+1}+e_{\ell,\ell+2}\right), \cr
H_\ell &\&=-e_{\ell-1,\ell-1}-e_{\ell,\ell}+e_{\ell+1,\ell+1} +e_{\ell+2,\ell+2}.
\eea
$\bullet$ Cartan matrix:
\begin{equation}
A=\left( \begin{array}{cccccc}
2 & 0 & -1 & 0 & \cdots & 0 \\
0 & 2 & -1 & & & \vdots  \\
-1 & -1 & 2 & \ddots & & 0 \\
0 & &  \ddots & \ddots & -1 & -1 \\
\vdots & & & -1 & 2 & 0 \\
0 & \cdots & 0 & -1 & 0 & 2
\end{array} \right)_{(\ell+1)\times(\ell+1)}.
\end{equation}
$\bullet$ Chevalley involution:
\begin{equation}\label{eq:ChevalleyD}
S=\mathrm{antidiag}\left(1,-1,1,\dots,(-1)^{\ell},(-1)^{\ell},(-1)^{\ell+1},\dots,-1,1\right)_{2\ell\times2\ell}.
\end{equation}

\break
 \bigskip
\noindent 
\small{ {\it Acknowledgements.} The authors would like to thank D. Yang, M. Caffasso, O. Lisovyy, P. Gavrylenko
for helpful discussions that contributed much to clarifying the results presented here.
This work was partially supported by the Natural Sciences and Engineering Research Council of Canada (NSERC). 

 \bigskip
\noindent 
\small{ {\it Data sharing.} 
Data sharing is not applicable to this article since no new data were created or analyzed in this study.
\bigskip

\bibliographystyle{JHEPAlph}
\bibliography{Biblio.bib}
\end{document}